%% file: arXiv1-per-node-sync.tex
\documentclass[a4paper,USenglish]{lipics-v2021}
\usepackage{amsmath}
\usepackage{amssymb}
\usepackage{algorithm}
\usepackage[noend]{algpseudocode}
\usepackage{afterpage}
\usepackage{amsthm}

\title{Randomized Tree-Intersection Leader Election}
\author{Yuval Emek}%
{Technion, Israel}%
{yemek@technion.ac.il}%
{https://orcid.org/0000-0002-3123-3451}
{Partially supported by an Israel Science Foundation (ISF) grant 730/24
and by The Bernard M. Gordon Center for Systems Engineering at the Technion.}
\author{Shay Kutten}%
{Technion, Israel}%
{kutten@technion.ac.il}%
{https://orcid.org/0000-0003-2062-6855}
{Partially supported by Israel Science Foundation (ISF) grant 1346/22
and by The Bernard M. Gordon Center for Systems Engineering at the Technion
and by the Technion's Grand Energy program.}
\author{Ido Rafael}%
{Technion, Israel}%
{ido.rafael@campus.technion.ac.il}%
{https://orcid.org/0000-0003-4923-4660}{}
\author{Gadi Taubenfeld}%
{Reichman University, Israel}%
{tgadi@runi.ac.il}%
{https://orcid.org/0000-0003-3070-5370}{}

\authorrunning{Y.~Emek, S.~Kutten, I.~Rafael, and G.~Taubenfeld}
\Copyright{Yuval Emek, Shay Kutten, Ido Rafael, Gadi Taubenfeld}
\ccsdesc[500]{Theory of computation~Distributed computing models}
\ccsdesc[500]{Theory of computation~Distributed algorithms}

\keywords{Leader election, randomized algorithms, sublinear communication, per-node message complexity, \textsf{CONGEST} model.}
\nolinenumbers
\hideLIPIcs

\EventEditors{John Q. Open and Joan R. Access}
\EventNoEds{2}
\EventLongTitle{42nd Conference on Very Important Topics (CVIT 2016)}
\EventShortTitle{CVIT 2016}
\EventAcronym{CVIT}
\EventYear{2016}
\EventDate{December 24--27, 2016}
\EventLocation{Little Whinging, United Kingdom}
\EventLogo{}
\SeriesVolume{42}
\ArticleNo{33}

\def\false{{\it false}}
\def\true{{\it true}}
\def\flag{{\it flag}}

\begin{document}

\maketitle

\begin{abstract}
We present a randomized leader election algorithm for synchronous complete $n$-node graphs in the \textsf{CONGEST} model
that introduces a highly tunable trade-off between time complexity and the per-node message complexity.
By adjusting a single branching parameter, $\ell$, system designers can smoothly shift
the algorithmic burden from execution time to per-node message complexity, all while maintaining a strictly
sublinear total message complexity of $O(\sqrt{n} \log^{1.5} n)$.
We achieve this by utilizing dynamically truncated $\ell$-ary tree expansions coupled with a novel
``silent pulse'' verification mechanism. By forcing the expansions to form exact-volume almost-complete trees,
nodes can safely aggregate topological weights without overshooting the sublinear message bounds.
Specifically, our algorithm achieves $O(\log_\ell \sqrt{n \log n})$ time (round) complexity and $O(\ell)$ per-node message complexity.
This flexibility allows networks with tight bandwidth constraints to operate with a minimal $O(1)$ per-node burden, while high-bandwidth environments can collapse the election into $O(1)$ time units.
\end{abstract}

\section{Introduction}

Leader election is a fundamental symmetry-breaking problem in distributed computing,
serving as a vital primitive for tasks ranging from token passing to consensus and
spanning tree construction. In modern, large-scale networks, algorithms must not
only be efficient in terms of time and total communication, but they must also
reduce and balance the total communication burden placed on individual nodes.

%It is important to clarify the exact nature of the symmetry-breaking problem we address.
We focus on the \emph{implicit} leader election problem, where the fundamental requirement
is that exactly one node terminates in a leader state, while all others terminate in a non-leader state.
This is in contrast to \emph{explicit} leader election, which imposes the heavier requirement
that every node in the network must also learn the unique identifier of the elected leader.
While significant progress has been made in optimizing the
total message complexity and execution time of implicit leader election
\cite{GRS2018,KM2021,KumarMolla2023,KMPP2020,KuttenJACM2015,Kutten2015},
less effort has been made toward improving the total
communication burden placed on individual nodes.

While total message complexity captures the aggregate communication cost of an algorithm,
it often masks severe imbalances in how this burden is distributed across the network.
In many modern distributed systems -- such as wireless sensor networks, IoT deployments,
and ad-hoc mobile networks -- individual nodes operate under strict energy and local bandwidth constraints.
An algorithm with an optimal total message complexity might still overwhelm a small subset of nodes,
rapidly depleting their batteries or creating severe local congestion hotspots that degrade overall network throughput.
Consequently, bounding the per-node message complexity is an important
requirement for ensuring equitable resource consumption, avoiding single points of failure,
and extending the operational lifetime of the network.

Beyond constraining the processing and energy burden on individual nodes,
bounding the total per-node message complexity inherently restricts overall link
congestion throughout the algorithm's execution. While the standard \textsf{CONGEST}
model limits the bandwidth on any edge \textit{per round}, an algorithm with a total
per-node message complexity of $O(X)$ strictly guarantees a total per-edge message
complexity of $O(X)$ across all rounds. Since the converse does not hold---a network
may maintain low per-edge traffic while still overwhelming high-degree nodes---optimizing
for per-node complexity serves as a strictly stronger condition that prevents both
node-level and link-level bottlenecks over the entire execution.

In this paper,
we present a randomized leader election algorithm for synchronous complete $n$-node
graphs that introduces a highly tunable trade-off between time complexity and
per-node message complexity.
We achieve this through a novel ``tree-intersection'' strategy.
Rather than relying on fixed-topology message passing, our algorithm allows
candidates to initiate an $\ell$-ary tree expansion, and
ensure that the trees initiated by different candidates
intersect with high probability.

However, a major technical hurdle arises when $\ell$ is large: a naive, perfect tree
expansion could overshoot the required footprint by a factor of $\ell$, ruining the
algorithm's sublinear communication guarantees.%
\footnote{The stopping condition for a perfect tree is very simple because
all leaves reside on the last level.}
We overcome this by forcing the
expansion to build an \emph{almost-complete tree}.%
\footnote{An almost complete tree is a tree structure where every level is completely filled,
except possibly the last level. Thus, all leaves reside on the last two levels.}
By embedding an explicit leaf-allocation
budget within messages, nodes dynamically truncate the expansion the
precise round the exact target footprint is satisfied.

To achieve time and message efficiently (and simplicity)
within the strict $O(\log n)$-bit bandwidth limit of the \textsf{CONGEST} model,
we introduce two novel algorithmic mechanisms: the ``silent pulse'' optimization and a ``conservation of weight'' principle. Instead of explicitly tracking duplicate branches or pruned paths,
our framework relies on strict layer-by-layer synchronization.
A parent node expects an acknowledgment from its children at a precisely calculated round.
If a branch is pruned by a higher-rank candidate or absorbed due to an internal collision,
it silently drops out of the communication flow. By treating non-arriving messages as zero weight,
the algorithm securely aggregates acknowledgments without requiring explicit rejection messages,
maintaining the structural integrity of the tree.

We note a fundamental topological barrier: it is impossible to design an algorithm that achieves
this highly tunable trade-off between time and per-node message complexity for arbitrary network topologies.
This limitation is best illustrated by a double star graph.%
\footnote{
A graph composed of two star graphs whose central vertices are
connected by an edge is called a double star graph.
We notice that in a star graph the single center node can simply elect itself.}
In a double star topology, any symmetry-breaking process
that requires a total message complexity of $M$ inherently forces the two center nodes
to act as a routing bottleneck. Consequently, the two center nodes must process a fraction
of the messages proportional to $\Omega(M)$.
Because the local load is dictated by the graph's structure rather than the algorithm's design,
the per-node message complexity cannot be decoupled from the total message complexity,
preventing the existence of a universal, topology-independent algorithm with
a tunable trade-off between time and per-node message complexity.

The branching factor, $1\leq \ell\leq 2\sqrt{n \log n}$, serves as the algorithmic dial.
As summarized in Table~\ref{tab:complexity_frontier}, scaling $\ell$
allows to smoothly shift the computational burden
from time to per-node complexity, all while maintaining a
sublinear total message complexity of $O(\sqrt{n} \log^{1.5} n)$.

\begin{table}[htbp]
\centering
\caption{The Complexity Frontier: Trade-offs based on Branching Factor $1\leq \ell\leq 2\sqrt{n \log n}$}
\label{tab:complexity_frontier}
\renewcommand{\arraystretch}{1.4}
\begin{tabular}{|l|c|c|c|}
\hline
\textbf{Branching Factor ($\ell$)} & \textbf{Time Complexity} & \textbf{Per-Node Messages} & \textbf{Total Messages} \\ \hline
$\ell = 1$ (Walk) & $O(\sqrt{n \log n})$ & $O(1)$ & $O(\sqrt{n} \log^{1.5} n)$ \\ \hline
$\ell = 2$ (Binary Tree) & $O(\log \sqrt{n \log n})$ & $O(1)$ & $O(\sqrt{n} \log^{1.5} n)$ \\ \hline
$\ell = 2\sqrt{n \log n}$ (Star) & $O(1)$ & $O(\sqrt{n \log n})$ & $O(\sqrt{n} \log^{1.5} n)$ \\ \hline
\textbf{General $\ell$ $(\ell\geq 2$)} & $\mathbf{O(\log_\ell \sqrt{n \log n})}$ & $\mathbf{O(\ell)}$ & $\mathbf{O(\sqrt{n} \log^{1.5} n)}$ \\ \hline
\end{tabular}
\end{table}

While branching factors of $\ell \ge 2$ offer exponential speedups in latency, the baseline case of $\ell=1$ remains crucial for strictly constrained networks. Although both $\ell=1$ and $\ell=2$ share the same $O(1)$ per-node message complexity, the $\ell=1$ configuration yields smaller constant factors. By relying on a single propagating path rather than coordinating multiple concurrent branches, it strictly minimizes local memory state and bandwidth overhead. This minimalist extreme is the starting point of the tunable spectrum our framework provides.

The parameterized approach we use frames our work not merely as a new algorithm,
but as a generalization of the state-of-the-art for complete graphs.
Prior innovative work by Kutten et al. \cite{Kutten2015} established strict
sublinear bounds for randomized leader election.
Their algorithm achieves an impressive $O(1)$ time complexity,
but inherently requires an $O(\sqrt{n \log n})$
per-node message complexity.
Viewed through the lens of our framework, the algorithm in \cite{Kutten2015} for complete graphs
operates at the extreme edge of our complexity frontier, analogous to setting
the branching factor to $\ell = 2\sqrt{n \log n}$.

Conversely, our baseline path-intersection algorithm ($\ell = 1$, first line of Table~\ref{tab:complexity_frontier}) represents the opposite extreme. By introducing the $\ell$-ary tree expansion,
we unlock the entire continuous spectrum between these two extremes.
This flexibility allows networks with tight bandwidth or energy constraints
(such as ad-hoc or sensor networks) to operate with a minimal $O(1)$ per-node burden,
while high-bandwidth environments with less energy constraints (such as data centers)
can collapse the election into constant time.

The remainder of this paper is organized as follows.
Section~\ref{sec:ModelPreliminaries} formally defines the computational model.
In Section~\ref{sec:Path-Intersection:Complete Graphs}
we present the \emph{path-intersection} algorithm
%(Algorithm \ref{alg:path-intersection1})
which is a simplified special
version of the \emph{tree-intersection} algorithm.
%(presented in Section \ref{sec:Tree-Intersection:CompleteGraphs}),
%and is used to demonstrate some of the key ideas of our approach.
In Section \ref{sec:Tree-Intersection:CompleteGraphs},
we introduce the tree-intersection algorithm.
Section~\ref{sec:related_work} situates our contribution within the broader literature,
and finally, Section~\ref{sec:discussion} concludes the paper and discusses open questions for future work.

\section{Model and Preliminaries}
\label{sec:ModelPreliminaries}

%Before detailing our parameterized tree-intersection mechanism, we formalize the network environment,
%the communication constraints, and the complexity measures that govern the algorithm's execution.\\
%\\
\textbf{Communication Model.}
We model the communication network as an undirected, connected complete graph $G=(V,E)$, where $V$
is the set of nodes with $|V|=n$ and $E$ is the set of communication links.
We assume the system is fully synchronous, meaning execution proceeds in discrete,
globally synchronized rounds.

To avoid ambiguity, we explicitly define the anatomy of a synchronous round in our model.
Each round consists of two consecutive steps: first, a \emph{send} step where each node
evaluates its local state and transmits messages to its neighbors; second, a
\emph{receive} step where the messages sent in the current round arrive and are
processed by the node to update its state.

Communication strictly adheres to the \textsf{CONGEST} model: in the send step of each round,
any node can send a potentially distinct message to each of its neighbors in $G$,
provided the size of each message is strictly bounded by $O(\log n)$ bits.
Each node $v \in V$ is anonymous. During the algorithm execution,
each node draws, from a polynomial space, a unique identifier $ID_v$ (unique w.h.p.).
All nodes have knowledge of $n$.\\
\\
\textbf{Complexity Measures.}
We evaluate the efficiency of our algorithms using the following three standard metrics:
\begin{itemize}
\item
\emph{Time Complexity:}
The total number of synchronous communication rounds required from the initiation
of the algorithm until the final node safely terminates and the system reaches its final state.
\item
\emph{Total Message Complexity:} The aggregate number of $O(\log n)$-bit messages
transmitted across all communication links in the network
over the entire execution of the algorithm.
\item
\emph{Per-Node Message Complexity:} The maximum number of $O(\log n)$-bit
messages processed (sent or received) by any single individual node throughout
the algorithm's execution.
%This metric strictly bounds the worst-case
%local bandwidth and energy overhead imposed on any individual network participant.
\end{itemize}
$~$\\
\textbf{Implicit Leader Election.}
Initially, all nodes are in a uniform initial state.
The goal of the algorithm is to break symmetry and elect a single leader.
An algorithm successfully solves implicit leader election if, upon termination of the algorithm,
the following conditions hold: (1) Exactly one node terminates in the leader state,
and (2) All other nodes terminate in a non-leader state.
Crucially, there is no requirement for nodes terminating in the non-leader state
to know the $ID$ of the elected leader \cite{Lyn96}.
This termination condition avoids the trivial $\Omega(n)$ message complexity lower bound
known for explicit leader election.\\
\\
\textbf{The notion ``with high probability''.}
For simplicity, since we generalize results from  \cite{Kutten2015},
we will use the same definition used in \cite{Kutten2015}.
An event occurs with high probability if, for some constant $c \ge 1$,
the probability of the event occurring is at least: $1 - n^{-c}$.
Our results can easily be adapted
(by scaling up the constant coefficients in our sample sizes and branch volumes accordingly)
to satisfy the following stronger definition \cite{EKRT2025,MU2017}:
An event $A$ occurs with high probability if
$\Pr (A) \geq 1 - n^{-c}$ for any (arbitrarily large) constant $c \ge 1$.

%\paragraph*{The Notion ``With High Probability''.}
%Throughout this paper, we say that an event occurs \emph{with high probability} (w.h.p.)\ if,
%for any arbitrarily large constant $c \ge 1$, the probability of the event occurring is at
%least $1 - n^{-c}$. This standard convention (see, e.g., \cite{EKRT2025, MitzenmacherUpfal})
%guarantees that the error probability of our randomized sub-routines can be made polynomially
%small by adjusting the constant factors in our parameters. We note that our results can
%be seamlessly adapted to softer definitions found in the literature—such as requiring
%a fixed success probability of $1 - 1/n^c$ for a specific choice of $c \ge 1$
%(as seen in \cite{Kutten2015})—by scaling down the constant coefficients in our sample sizes and branch volumes %accordingly.

%\newpage
\section{Randomized Path-Intersection Leader Election}
\label{sec:Path-Intersection:Complete Graphs}
The \emph{path-intersection} algorithm,
%(Algorithm \ref{alg:path-intersection1}),
presented in this section, is a simplified, special version of the more complex \emph{tree-intersection} algorithm presented in the next section.
We present the path-intersection approach separately for two key reasons.
First, it serves as a pedagogical stepping stone, demonstrating the core random-walk intersection and trace-back mechanics without the added complexities of internal collisions and weight aggregation.
Second, in strictly constrained networks where $\ell=1$ is the optimal operating point, Algorithm~\ref{alg:path-intersection1} provides better constant factors in terms of local memory and processing overhead.
%as nodes only track a single propagating path rather than coordinating multiple concurrent branches.

In Algorithm \ref{alg:path-intersection1},
%operates in a synchronous undirected complete graph with $n$ nodes.
initially, each node independently becomes a \textit{candidate} with probability $p = \frac{\log n}{n}$.
This probability ensures that at least one and at most $4\log n$ nodes become candidates w.h.p.
Then, each candidate $v$ chooses a random rank $r_v \in \{1, \dots, n^4\}$, which ensures that the
ranks are unique w.h.p.\ due to the domain size $n^4$.

Candidates are whittled down through intersecting random walks.
Each candidate creates a token with a unique rank which starts a random walk.
To stay within the \textsf{CONGEST} bandwidth limits,
every node only agrees to pass along the single highest-ranked token it has seen so far,
effectively disqualifying smaller competitors on the spot.
Only the token that manages to be the \emph{maximum} at every single step
of its random walk will reach the end of the walk and find a path (of stored edges)
to follow back to the candidate that generated it.
If a candidate sees its own token returns,
%after this token completed a full length random walk,
it knows it is the global winner.

The full length of the random walks is $L = 2\times\lceil\sqrt{n \log n}\rceil -1$,
and it ensures that the walks are long enough to encounter one another w.h.p.
To ensure the global winner is correctly identified even if ranks intersect at different times,
a returning token is only permitted to continue
its journey if its rank remains the highest ever seen by every node on its path,
discarding any earlier tokens that were surpassed by a late-arriving superior rank.

The advantage of the path-intersection algorithm is its very low per-node message complexity,
only $O(1)$ messages, which improves upon the previously best known per-node message complexity
of $2\times\lceil\sqrt{n \log n}\rceil$ \cite{Kutten2015}.
Its core limitation is its sequential nature which leads to high time complexity,
$2\times\lceil\sqrt{n \log n}\rceil$ rounds, compared to one round of the algorithm of \cite{Kutten2015}.
Both algorithms (\cite{Kutten2015} and ours) have the same $O(\sqrt{n} \log^{1.5} n)$ total message complexity.

\subsection{The algorithm}
%Algorithm~\ref{alg:path-intersection1} appears in the sequel.
%Algorithm~\ref{alg:path-intersection1} appears below.
%\newpage
\begin{algorithm}[ht!]
\caption{Path-Intersection Leader Election in Complete Graphs}
\label{alg:path-intersection1}
%\small
\footnotesize
\begin{algorithmic}[1]
\Statex \textbf{The following code is executed by each node $v \in V$}
\Statex \textbf{Constants (known to all nodes):}
\Statex $G=(V, E)$ where $|V|=n$ \Comment $G$ is a complete graph
\Statex $L = 2\times\lceil\sqrt{n \log n}\rceil -1$ \Comment length of a random walk
\Statex \textbf{State Variables at node $v$:}
\State $r_{max} \gets 0$; $r_{v} \gets 0$; $\flag \gets \false$; $status \gets \bot$;
$prev\_hop[1 \dots L] \gets \bot$
\Comment{initialization}
\Statex \textbf{Phase 1: Candidate Selection (Round 0)}
\State Node $v$ independently becomes a \textit{candidate} with probability $p = \frac{\log n}{n}$
\If{$v$ is a candidate}
    \State $r_v\gets$ generate a random rank from $\{1, \dots, n^4\}$ \Comment{unique rank w.h.p.}
    \State $T_v\gets$ create token $\langle r_v \rangle$; $r_{max} \gets r_{v}$; $\flag \gets \true$
           \Comment{ready to forward $T_v$}
    \State $T_{best} \gets T_v$ \Comment{$T_{best}$ and $T_v$ refer to the \emph{same} token}
\EndIf

\Statex \textbf{Phase 2: Forward Random Walk (Rounds $1$ to $L$)}
\For{round $i = 1$ \textbf{to} $L$} \Comment{global rounds $1$ to $L$}
    \If{$\flag = \true$} \Comment{one step of the random walk}
         \State Select edge to neighbor $u \in \mathcal{N}(v)$ u.a.r. and forward $T_{best}$ to $u$; $\flag \gets \false$
    \EndIf

    \If{$v$ receives one or more tokens in the current round} \Comment{updates $T_{best}$}
        \State $T_{best}\gets $ the token with maximum rank among the received tokens
        % \Comment{the other tokens are dropped}
    \If{$r_{best} \geq r_{max}$} \Comment{$r_{best}$ denotes the rank of $T_{best}$}
        \State $r_{max} \gets r_{best}$; $\flag \gets \true$ \Comment{ready to forward $T_{best}$}
        \State $prev\_hop[i] \gets \text{arrival edge from sender of } T_{best}$ \Comment{facilitate trace-back}
    \Else
        ~\textbf{drop} $T_{best}$ \Comment{disqualified by a higher rank that arrived earlier}
    \EndIf
    \EndIf
\EndFor

\Statex \textbf{Phase 3: Trace-back and Verification (Rounds $L+1$ to $2L$)}
\For{round $i = 1$ \textbf{to} $L$} \Comment{global rounds $L+1$ to $2L$}
    \If{$\flag = \true$}
         \State Forward $T_{best}$ via edge $prev\_hop[L-i+1]$; $\flag \gets \false$\Comment{a retrace step}
    \EndIf

    \If{$v$ receives token $T_u$ in the current round} \Comment{$v$ can receive at most one token!}
    \If{$r_{u} = r_{max}$} \Comment{it is not possible that $r_{u} > r_{max}$}
        \State $T_{best}\gets T_u$; $\flag \gets \true$ \Comment{ready to forward $T_{best}$}
    \Else
        ~\textbf{drop} $T_{best}$ \Comment{disqualified by a higher rank that arrived later}
    \EndIf
    \EndIf
\EndFor

\Statex \textbf{Phase 4: Termination  (Round $2L+1$)}
\If{$v$ is a candidate \textbf{and} $v$ receives its own token $T_v$ in Round $2L$ \textbf{and} $r_{v} = r_{max}$}
     \State $status \gets$ \textbf{LEADER}
     \textbf{else} $status \gets$ NOT\_LEADER
\EndIf
\end{algorithmic}
\end{algorithm}
%
%\begin{remark}[Handling cycles during random walks]
%By utilizing the index $i$ to remember the previous hop,
%we have ensured that the algorithm is robust against the ``tail-biting'' nature of random walks.
%A node needs to know which entry in its history corresponds to the specific step of the journey being reversed.
%As formalized in Lemma~\ref{lem:unambiguous_retraction},
%this maintains the elegance of the synchronous model
%by treating the walk as a path in a time-expanded graph.
%\end{remark}

%\newpage
\subsection{Correctness and Complexity Analysis of Algorithm~\ref{alg:path-intersection1}}
\label{sec:analysis1}
%
%Next, we analyze the correctness and complexity of the \emph{Randomized Path-Intersection}
%algorithm.
%\subsubsection{Correctness}
%
At the heart of our approach is the well-known combinatorial phenomenon often
referred to as the Birthday Paradox \cite{MU2017}.
This phenomenon inspired the following lemma
which captures the high probability of intersection between any two random walks of sufficient length.

\begin{lemma}[Random Walk Intersection Probability]\label{lem:RW-intersection}
In a complete graph $K_n$, let $W_1$ and $W_2$
be two independent random walks generated by Algorithm~\ref{alg:path-intersection1}, each of length
$L = 2\left\lceil \sqrt{n\log n}\right\rceil - 1$.
The probability that $W_1\cap W_2=\emptyset$ is at most $n^{-4}$.
\end{lemma}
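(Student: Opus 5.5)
The plan is to reveal the two walks one after the other. First I fix the vertex set $S$ visited by $W_1$. Then I bound the probability that $W_2$ misses $S$ at every one of its steps. I treat $W_1\cap W_2$ as the set of nodes visited by both walks. I use the transition rule of Algorithm~\ref{alg:path-intersection1}: each step moves to a neighbor chosen uniformly among the other $n-1$ nodes, independently of everything else, and the two walks use independent randomness. Write $m=\lceil\sqrt{n\log n}\rceil$, so $L=2m-1$.

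\textbf{Step 1 ($W_1$ visits many distinct nodes).} Let $R$ be the number of steps of $W_1$ that land on a node it has already visited. At each step, whatever the history, at most $L$ nodes have been visited so far. So the step is a repeat with probability at most $L/(n-1)$. Hence $R$ is stochastically dominated by $\mathrm{Bin}(L,L/(n-1))$, whose mean is $\approx 4\log n$. A Chernoff bound (the form $\Pr[X\ge t]\le 2^{-t}$ for $t\ge 6\mu$) gives $R\le c_0\log n$ except with probability at most $n^{-5}$, for a suitable constant $c_0$. On this event $|S|\ge L+1-c_0\log n = 2m-c_0\log n$.

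\textbf{Step 2 ($W_2$ hits $S$).} Condition on $S$ with $|S|=s$. Consider any step of $W_2$, with any history of $W_2$ that has so far avoided $S$. The current node is outside $S$, so the next node lies in $S$ with probability exactly $s/(n-1)\ge s/n$. The walk $W_2$ takes $L$ steps, and its start node is one more chance to hit $S$, which I simply ignore. Multiplying the conditional probabilities gives
\[
\Pr[W_1\cap W_2=\emptyset \mid |S|=s]\;\le\;\Bigl(1-\tfrac{s}{n}\Bigr)^{L}\;\le\;\exp\!\Bigl(-\tfrac{sL}{n}\Bigr).
\]
With $s\ge 2m-c_0\log n$ and $L=2m-1$, we get $sL/n\ge (4m^2/n)(1-o(1))\ge 4\log n\,(1-o(1))$. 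Since $\log$ is base $2$, $\exp(-4\log n)=n^{-4/\ln 2}\approx n^{-5.77}$. The factor $1/\ln 2$ leaves enough slack to absorb the $(1-o(1))$ loss, giving at most $n^{-5}$ for large $n$.

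\textbf{Step 3 (union bound).} The two error probabilities sum to at most $2n^{-5}\le n^{-4}$ for $n\ge 2$. Small $n$ can be handled directly, or absorbed by the w.h.p.\ convention.

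I expect the main obstacle to be the constant bookkeeping. The bound $n^{-4}$ is essentially tight for the naive estimate $\exp(-L^2/n)\approx n^{-4/\ln 2}$. So the losses must be genuinely lower-order: the repeats in $W_1$, the $-1$ in $L$, and the $n$ versus $n-1$. If $\log$ were the natural logarithm, this slack would vanish. I would then instead count ordered pairs of times $(i,j)$ with $W_1(i)=W_2(j)$, or exploit the start node of $W_2$ and the symmetric roles of the two walks, to recover the missing factor.
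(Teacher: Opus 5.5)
Your proof is correct and follows essentially the paper's route: condition on the vertex set $S$ visited by $W_1$, bound the probability that $W_2$ avoids $S$ by a product of per-step factors $1-|S|/(n-1)$, and lower-bound $|S|$ via stochastic domination plus a Chernoff bound. The only difference is quantitative. The paper settles for $|S|\ge 3N/4$ (failing with probability $e^{-\Omega(N)}$) and relies on the base-$2$ slack $3/\ln 2>4.3$, whereas your repeat count gives $|S|\ge N-O(\log n)$ (failing with probability $n^{-5}$) and the roomier exponent $4/\ln 2\approx 5.77$.
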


\input{Ido-0-alg-1-intersection.tex}

\begin{lemma}[Unambiguous Path Retraction]
\label{lem:unambiguous_retraction}
For any token $T$ that completes Phase 2 at round $L$,
the sequence of edges traversed in Phase 3 is the exact reverse of the sequence
of edges traversed in Phase 2, regardless of possible cycles.
\end{lemma}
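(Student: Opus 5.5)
The plan is to treat the walk of $T$ as a path in the time-expanded graph, whose vertices are pairs $(v,i)$ with $v\in V$ and $i\in\{0,\dots,L\}$. Write the Phase-2 trajectory of $T$ as $v_0,v_1,\dots,v_L$, where $v_0$ is the originating candidate and $v_i$ is the node that receives $T$ in round $i$. The key observation is that the array $prev\_hop[\cdot]$ is indexed by the global round number, not by a single per-node pointer. So even if the walk revisits a node $w$ (a cycle, $w=v_i=v_j$ with $i\neq j$), the two visits write to the distinct entries $prev\_hop[i]$ and $prev\_hop[j]$ of $w$ and cannot overwrite each other.

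The steps, in order, are as follows.

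First, by induction on $i$, I show that since $T$ completes Phase 2, at every step $i$ the token $T$ was the maximum-rank received token at $v_i$ with $r_T\ge r_{max}$. Hence $v_i$ executed $prev\_hop[i]\gets$ the edge to $v_{i-1}$, and set $\flag$ so that in round $i+1$ it forwarded $T$ (so $T_{best}$ at $v_i$ is $T$ in round $i+1$). I must check that no other token overwrites entry $i$ at $v_i$. Line-by-line, a node writes $prev\_hop[i]$ only in round $i$ and only once, for the single $T_{best}$ chosen in that round, which is $T$.

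Second, I show that no later event in Phase 2 alters $prev\_hop[i]$ at $v_i$. This holds because writes in rounds $j\neq i$ touch only index $j$.

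Third, by induction on the Phase-3 iteration $k=1,\dots,L$ (global round $L+k$), I claim that if $T$ is still alive, then at the start of iteration $k$ it is held with $\flag=\true$ exactly at $v_{L-k+1}$. That node forwards it via $prev\_hop[L-k+1]$, which by steps one and two is the edge to $v_{L-k}$. So the $k$-th edge of Phase 3 is the reverse of the $(L-k+1)$-th edge of Phase 2.

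One subtlety needs care. In Phase 3 a node $w$ may hold $\flag=\true$ from an unrelated reason, or may serve as $v_{L-k+1}$ for several values of $k$ because of cycles. I handle this by noting that the index used in iteration $k$ is $L-k+1$, determined by the global round alone, so each revisit consults its own entry. The base case $k=1$ uses $\flag=\true$ at $v_L$ at the end of round $L$.

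The main obstacle is this bookkeeping of $\flag$ and $T_{best}$ at nodes visited multiple times. The risk is that a node revisited in Phase 2 at steps $i<j$ could still have $\flag$ set, or a stale $T_{best}$, when the Phase-3 retrace passes through it. To rule this out, I would show that $\flag$ is reset immediately after each forward. I would also show that $T_{best}$ is reassigned to the incoming token upon receipt in Phase 3. Then the state relevant for the retrace at $(w,L-k+1)$ depends only on the token received in round $L+k-1$ and the entry $prev\_hop[L-k+1]$, which completes the induction. The lemma concerns only the edge sequence, so a possible drop of $T$ in Phase 3 merely truncates the retrace without changing the edges already traversed.
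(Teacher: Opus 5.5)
Your proposal is correct and takes the same route as the paper: $prev\_hop$ is indexed by the global round, so repeated visits to a node write distinct entries, and Phase-3 iteration $k$ reads exactly entry $L-k+1$, which reverses the forward edges one by one. Your extra bookkeeping of $\flag$ and $T_{best}$ is more careful than the paper's proof, and it tacitly uses one fact you should cite explicitly from the proof of Theorem~\ref{thm:CONGEST1}: a node receives at most one token in any Phase-3 round, because each node sends at most one token per Phase-2 round, so at most one node's $prev\_hop[j]$ points back to it.
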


\begin{proof}
Let the forward path of $T$ be represented as a sequence of nodes $P = (v_0, v_1, \dots, v_L)$.
In each round $i \in \{1, \dots, L\}$, node $v_i$ receives the token from $v_{i-1}$ and stores
the arrival edge in $prev\_hop[i]$.
If the path contains a loop such that $v_a = v_b$ for $a \neq b$, the node $v_a$ will
store two distinct arrival edges in two distinct memory locations: $prev\_hop[a]$ and $prev\_hop[b]$.
In Phase 3, during round $j \in \{1, \dots, L\}$, the node currently holding the token uses the index
$k = L-j+1$ to select its next hop. Since $j$ increments linearly, $k$ decrements linearly from $L$ to $1$.
This ensures that at any round $j$, the node $v_{L-j+1}$ retrieves exactly the edge $prev\_hop[L-j+1]$
that points back to $v_{L-j}$, effectively reconstructing the path $P' = (v_L, v_{L-1}, \dots, v_0)$.
Because the indexing is tied to the global synchronous round,
the physical identity of the nodes does not cause ambiguity.
\end{proof}

\begin{theorem}[Safety]
\label{thm:safety1}
With high probability, the algorithm elects at most one leader.
\end{theorem}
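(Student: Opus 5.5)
We argue by contradiction and condition on the high-probability events already available. By the choice $p=\frac{\log n}{n}$ there are at most $4\log n$ candidates w.h.p. Since ranks are drawn from $\{1,\dots,n^4\}$, a union bound over at most $\binom{4\log n}{2}$ pairs shows all candidate ranks are distinct w.h.p. Suppose two distinct candidates $u$ and $w$ both become \textbf{LEADER}, and assume without loss of generality that $r_u>r_w$. By Phase 4, $w$ receives its own token $T_w$ in round $2L$. A token only moves in Phase 2 when its holder's flag is set, and only moves in Phase 3 when the receiver has $r_{max}$ equal to its rank. So $T_w$ survived every step of Phase 2, and Lemma~\ref{lem:unambiguous_retraction} says its Phase 3 route is exactly the reverse of its forward walk $W_w=(v_0,\dots,v_L)$. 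Along the retrace, each node $v_i$ with $i<L$ checked $r_w=r_{max}$ in round $L+(L-i)$. Node $v_L$ holds $T_w$ at the end of Phase 2, so there $r_{max}=r_w$ as well.

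We next show that $T_u$ also completed its full forward walk $W_u$. Since $u$ is also elected, this follows by the same reasoning. This step matters, because otherwise $W_u$ could be truncated and the intersection lemma would not apply directly. Lemma~\ref{lem:RW-intersection} bounds the chance that two full-length walks miss each other by $n^{-4}$. A union bound over at most $16\log^2 n$ pairs of candidates then gives, w.h.p., $W_u\cap W_w\neq\emptyset$ for every pair whose tokens perform full walks. To be safe, we apply the lemma to the hypothetical full-length walks that each candidate would produce if never dropped. These can be coupled with the random neighbor choices, so the intersection holds for the actual walks whenever those walks are full.

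Let $x$ be a common node, visited by $T_u$ at step $a$ and by $T_w$ at step $b$. The variable $r_{max}$ at $x$ never decreases. After $T_u$'s visit, $r_{max}(x)\ge r_u>r_w$.

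If $a\le b$, then when $T_w$ arrives in Phase 2 at step $b$, the test $r_{best}\ge r_{max}$ fails (or $T_u$ wins the max among simultaneous arrivals), so $T_w$ is dropped. This contradicts its completion.

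If $a>b$, then $T_w$ passed $x$ earlier, but $r_{max}(x)$ has risen to at least $r_u$ by the end of Phase 2. During Phase 3, $T_w$ must re-enter $x$ at the retrace step matching index $b$, or be held at $x$ if $b=L$. The check $r_w=r_{max}$ then fails and the token is dropped. This again contradicts $w$ receiving $T_w$ in round $2L$. Subtle sub-case: when $b=L$, $x=v_L$ forwards without a check. Phase 3 forwarding happens only when the flag is set, and $T_u$'s later arrival at $x$ overwrote $T_{best}$. So either $T_w$ is no longer the token forwarded, or the check at the next node fails. We handle this by applying the check to the final delivery at $w$ itself, where Phase 4 requires $r_w=r_{max}$ at $w$.

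The main obstacle is the case analysis at the intersection node. We need that every visit of $T_w$ to $x$ is matched by a check in Phase 3 that sees the final value of $r_{max}(x)$, and we must handle simultaneous arrivals and the endpoint. The remaining steps are union bounds, and combining the failure probabilities gives total failure probability $O(n^{-4}\log^2 n + n^{-2})$, which is at most $n^{-c}$ w.h.p.
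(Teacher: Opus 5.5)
Your proof is correct and is essentially the paper's argument: the two walks intersect (Lemma~\ref{lem:RW-intersection}), $r_{max}$ is monotone, and the equality check fails along the retraced path (Lemma~\ref{lem:unambiguous_retraction}); the only differences are that you compare two elected candidates rather than an elected candidate with $v_{max}$, and that you spell out the coupling, the union bound over pairs, and the timing cases, which the paper leaves implicit. One minor cleanup: the ``$b=L$'' sub-case under $a>b$ is vacuous since $a\le L$, and your first paragraph already finishes the proof without any case split, because every node of $W_w$ ends Phase~2 with $r_{max}=r_w$, whereas $r_{max}(x)\ge r_u>r_w$ at that point.
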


\begin{proof}
Let $C$ be the set of candidates. If $|C|=0$, no leader is elected.
If $|C| \geq 1$, let $v_{max}$ be the candidate with the unique maximum rank $rank_{max}$ (ranks are unique w.h.p. due to the domain size $n^4$).%
\footnote{The probability that any two of the $n$ nodes select the same rank is bounded by
$\binom{n}{2} \frac{1}{n^4} < \frac{1}{2n^2}$.
Thus, even if every node selects a rank, all ranks are unique with high probability.}
Suppose for contradiction a candidate $u \neq v_{max}$ with rank $r_u < rank_{max}$ is elected. For $u$ to be elected, its token $T_u$ must successfully complete the Trace-back Phase. This implies that for every node on $u$'s path, the condition $r_u = r_{max}$ held true during the retrace steps.
However, by Lemma~\ref{lem:RW-intersection}, the random walk of $u$ and the random walk of $v_{max}$ intersect at some node $z$ with high probability.
Furthermore, the phases ensure that all ``forward'' messages happen before all ``rebound'' messages.
When $v_{max}$'s token $T_{max}$ visited $z$ (during Phase 2), node $z$ updated $r_{max} \leftarrow rank_{max}$.
By Lemma~\ref{lem:unambiguous_retraction}, $T_u$ must retrace through $z$.
Since $rank_{max} > r_u$, any subsequent attempt by $T_u$ to retrace through $z$ would fail the check
$r_u = r_{max}$ at node $z$, causing $T_u$ to be dropped. Thus, $u$ cannot be elected.
\end{proof}

\begin{theorem}[Liveness]
\label{thm:liveness1}
With high probability, exactly one leader is elected.
\end{theorem}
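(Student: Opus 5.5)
Given Theorem~\ref{thm:safety1}, it remains to show that w.h.p.\ at least one leader is elected. The natural witness is $v_{max}$, the candidate of maximum rank. I would show that its token $T_{max}$ survives both Phase~2 and Phase~3 and returns to $v_{max}$ exactly in round $2L$. Then $v_{max}$ sets its status to \textbf{LEADER}, and by Theorem~\ref{thm:safety1} no other node does.

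First I handle the preconditions. Each node becomes a candidate independently with probability $\frac{\log n}{n}$, so $\Pr(|C|=0)=(1-\frac{\log n}{n})^n\le e^{-\log n}$. This is polynomially small; to get a higher exponent I would rescale the constant in $p$ if needed, consistent with the paper's remark on w.h.p. By the footnote in the proof of Theorem~\ref{thm:safety1}, ranks are unique w.h.p. So, conditioning on $|C|\ge1$ and unique ranks, $v_{max}$ is well defined and $rank_{max}$ is strictly larger than every other rank.

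Next I argue deterministically, given these events, that $T_{max}$ is never dropped in Phase~2. I proceed by induction on the round $i$. Suppose $T_{max}$ arrives at node $z$ in round $i$. Among the tokens received in that round it has the maximum rank, so it becomes $T_{best}$. Since no rank exceeds $rank_{max}$, the test $r_{best}\ge r_{max}$ passes, even if $z$ was visited earlier by $T_{max}$ itself; this is why the test is $\ge$ and not $>$. Hence $z$ sets $\flag$ and records $prev\_hop[i]$, so $T_{max}$ is forwarded in round $i+1$. One subtlety must be checked: a node holding another token with $\flag=\true$ that also receives $T_{max}$ in the same round overwrites $T_{best}$. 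That is harmless, since the algorithm forwards only the best token. Consequently $T_{max}$ completes all $L$ steps, and every node $z$ on its path ends Phase~2 with $r_{max}=rank_{max}$, because $r_{max}$ is monotone and bounded by $rank_{max}$.

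For Phase~3 I use Lemma~\ref{lem:unambiguous_retraction}: the retrace visits $v_{L},\dots,v_0$ in order via the stored $prev\_hop$ entries. At each step the receiving node has $r_{max}=rank_{max}=r_u$, so the check passes and it forwards. The main obstacle is the claim that a node receives at most one token per Phase~3 round, and that $prev\_hop[L-i+1]$ at the node was written by $T_{max}$ rather than overwritten by another token in the same forward round. For the latter, in forward round $k$ the node $v_k$ stored the sender of the maximum token, which is $T_{max}$. For collisions in Phase~3, I would observe that any other token sharing a retrace node at the same time index would have arrived there in the same forward round and lost to $T_{max}$; so no competing retrace can reach $v_{max}$'s path at a conflicting time, or if it does, it is dropped. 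Finally, $T_{max}$ reaches $v_0=v_{max}$ in round $2L$ with $r_{v}=r_{max}$, so $v_{max}$ is elected. A union bound over the failure events ($|C|=0$ and duplicate ranks) gives the w.h.p.\ claim.
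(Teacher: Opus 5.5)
Your proof is correct and follows the same route as the paper: $v_{max}$'s token is never dropped in Phase~2 because no rank exceeds $rank_{max}$, so the check $r_u = r_{max}$ passes at every retrace step, the token returns to $v_{max}$ in round $2L$, and Theorem~\ref{thm:safety1} rules out a second leader. Your extra checks fill in details that the paper's short proof leaves implicit: why the test must be $\ge$, that the $prev\_hop$ entries along the path were written by $T_{max}$, and that each node receives at most one token per Phase~3 round (the paper shows the last point in its \textsf{CONGEST} theorem).
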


\begin{proof}
No candidate is selected with probability $\approx (1-p)^n$, negligible for $p = \frac{\log n}{n}$.
The candidate $v_{max}$ with the global maximum rank $rank_{max}$ will never have its token dropped.
In Phase 2, for any node $w$ visited by $T_{max}$, the local variable $r_{max}$ of $w$
is updated to $rank_{max}$ because no rank
is strictly greater than $rank_{max}$. In Phase 3, the condition $rank_{max} = r_{max}$ will hold at every
step of the trace-back. Thus, $T_{max}$ returns to $v_{max}$, and $v_{max}$ is elected.
Combined with Theorem~\ref{thm:safety1}, exactly one leader is elected.
\end{proof}

\begin{theorem}[\textsf{CONGEST}]
\label{thm:CONGEST1}
The constraint of the \textsf{CONGEST} model is satisfied.
\end{theorem}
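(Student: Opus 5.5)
The plan is to check the two requirements of the \textsf{CONGEST} model separately. First, every message has $O(\log n)$ bits. Second, in every round each node sends at most one message over each incident edge.

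For message size, I would note that every message in Algorithm~\ref{alg:path-intersection1} is a token $\langle r \rangle$. Its rank satisfies $r \in \{1,\dots,n^4\}$, so it takes $\lceil 4\log n\rceil = O(\log n)$ bits. No other information is attached to a token. Routing data such as $prev\_hop[\cdot]$ stays in local memory and is never transmitted. The rank variables $r_{max}$ and $r_v$ are local and also lie in the same range.

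For the per-round, per-edge condition, the key observation is that a node sends only when $\flag = \true$, and each send step resets $\flag \gets \false$. So in every round of Phase~2 and Phase~3, a node sends at most one token, along a single edge: either the u.a.r.\ chosen neighbor or $prev\_hop[L-i+1]$. I would also check that Phase~1 and Phase~4 send nothing.

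The one subtle point is whether a node that receives several tokens in one round could later have to forward more than one of them. This does not happen. In Phase~2 only the maximum-rank token $T_{best}$ is kept and a single flag is set, so at most one token is forwarded in the next round. In Phase~3 a node forwards only $T_{best}$ and sets the flag at most once per round. Multiple arrivals on distinct edges in one receive step are allowed, since each arriving token uses its own edge. I expect the main obstacle, which is mild, to be arguing that a flag set in round $i$ is consumed by the send step of round $i+1$ before it could be set again. This follows from the round anatomy defined in Section~\ref{sec:ModelPreliminaries}, where the send step precedes the receive step in each round.
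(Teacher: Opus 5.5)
Your proof is correct and follows essentially the same approach as the paper: because a node forwards only the single token $T_{best}$, guarded by one boolean flag, it sends at most one $O(\log n)$-bit rank message per round. The only difference is in Phase~3, where you bound sends directly via the flag, whereas the paper argues via timing that a node receives at most one returning token per round; the paper also needs that timing fact for its Phase~3 pseudocode to be well defined, but the \textsf{CONGEST} bound itself does not.
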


\begin{proof}
If a token $T_u$ is not the maximum at some point in its forward random walk, it is dropped.
Thus, in such a case, $T_u$ never starts the Trace-back Phase and never returns to the candidate.
By filtering for $T_{best}$ in Phase 2, we reduce the message complexity
to at most one message per node per round.
In Phase 3, a token $T$ will return to a node $v$ at a given round, say $i$,
only if it was submitted by
$v$ $2(L-i)$ rounds earlier, which implies that a node will not receive more than
one token in any round during Phase 3.
Thus, a node can only send one $O(\log n)$-bit message per round per edge.
\end{proof}

%\subsubsection{Complexity Analysis}
%
\begin{theorem}[Complexity Bounds]\label{thm:RW:Complexity Bounds}
Algorithm~\ref{alg:path-intersection1} achieves the following bounds w.h.p.:
\begin{enumerate}
    \item \textbf{Time Complexity:}
    $4\times \left\lceil\sqrt{n \log n}\right\rceil - 2 = O(\sqrt{n \log n})$
    rounds.
    \item \textbf{Total Message Complexity:}
    $O(\sqrt{n} \log^{1.5} n)$
    messages.
    \item \textbf{Per-Node Message Complexity:}
    $O(1)$
    messages.
\end{enumerate}
\end{theorem}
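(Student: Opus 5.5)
We prove the three items separately, taking the structure of Algorithm~\ref{alg:path-intersection1} as given and using a Chernoff bound on the number of candidates.

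\textbf{Time complexity.} This part is deterministic. Phase~1 is round~$0$, Phase~2 occupies rounds $1$ to $L$, Phase~3 occupies rounds $L+1$ to $2L$, and Phase~4 is a local decision at round $2L+1$ that sends no messages. The communication rounds therefore number $2L = 2(2\lceil\sqrt{n\log n}\rceil-1) = 4\lceil\sqrt{n\log n}\rceil-2$. Under the round anatomy of Section~\ref{sec:ModelPreliminaries}, the final local computation adds only a constant, so the bound is $O(\sqrt{n\log n})$.

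\textbf{Total message complexity.} First, every message of Phases~2 and~3 is a token carrying a rank in $\{1,\dots,n^4\}$, so each message has $O(\log n)$ bits.

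Second, we bound the number of candidates $|C|$. We have $\mathbb{E}[|C|]=\log n$, and a Chernoff bound gives $|C|\le 4\log n$ w.h.p., as claimed in the algorithm's description.

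Third, each token sends at most one message per round of Phase~2, because a node forwards only $T_{best}$ and only when $\flag=\true$. Each token likewise sends at most one message per round of Phase~3, since it retraces a single path (Lemma~\ref{lem:unambiguous_retraction}). Hence each token generates at most $2L$ messages. Dropped tokens only reduce this count, and merged tokens send one message instead of several.

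Combining these, the total is at most $2L\cdot|C| \le 4\log n\cdot 2L = O(\sqrt{n\log n}\cdot\log n) = O(\sqrt{n}\log^{1.5} n)$ w.h.p.

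\textbf{Per-node message complexity.} This is the delicate item. A naive bound lets a node be visited in many rounds, so it could receive up to $|C|\cdot O(L)$ messages, which is far from $O(1)$. The argument has to bound how often any fixed node $v$ is hit by the token flow.

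Fix $v$. At each step of each walk, the next node is uniform among the $n-1$ neighbours, so each step lands on $v$ with probability at most $1/(n-1)$. Across all walks there are at most $N=|C|\cdot L = O(\log n\sqrt{n\log n})$ forward steps. The expected number of arrivals at $v$ is therefore $\mu = O(\log^{1.5} n/\sqrt n) = o(1)$.

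A Chernoff bound for $\mu\ll1$ gives $\Pr[X\ge k]\le (e\mu/k)^k$. Taking $k$ a sufficiently large constant makes this at most $n^{-c-1}$, since $\mu = n^{-1/2+o(1)}$ means $\mu^k \le n^{-k/2+o(k)}$. A union bound over all $n$ nodes then shows that every node receives $O(1)$ forward tokens w.h.p.

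Each forward arrival causes at most one forward send. Phase~3 messages at $v$ are in one-to-one correspondence with forward traversals through $v$, by Lemma~\ref{lem:unambiguous_retraction} and the time-indexed $prev\_hop$. The per-node count is thus at most a constant times the number of visits, which is $O(1)$ w.h.p.

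The independence needed for the Chernoff bound is subtle, because merging makes later steps depend on earlier ones. We handle this by coupling. Give every candidate an idealized independent walk of length $L$, and note that the actual token traffic through $v$ is dominated by the number of visits of these idealized walks to $v$. The Chernoff bound is then applied to the independent indicators of those idealized walks.
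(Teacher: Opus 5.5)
Your proposal is correct and follows essentially the same route as the paper: the same $2L$ round count, the same Chernoff bound $|C|\le 4\log n$ with at most $2L$ messages per token, and the same per-node argument. That argument dominates the forward arrivals at a fixed $v$ by $\operatorname{Bin}(|C|L,1/(n-1))$, takes a constant-$k$ tail using $\mu=O(\log^{1.5}n/\sqrt{n})$, union-bounds over nodes, and charges Phase~3 messages to Phase~2 ones; the paper uses $k=6$ and obtains at most $4X_v^F+2\le 22$ messages per node.

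One small repair is needed. The visit indicators of a single idealized walk are not independent; for instance, a walk cannot be at $v$ in two consecutive steps. So justify the Chernoff step by the per-step conditional bound $1/(n-1)$, i.e., by sequential stochastic domination by the binomial exactly as the paper does. Also, your count of forward sends should include the candidate's one initial message.
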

\input{Ido-1-alg-1-complexity.tex}

%\newpage
\section{The Generalized Tree-Intersection Framework}
\label{sec:Tree-Intersection:CompleteGraphs}
To achieve a dramatic speedup in time complexity, at the expense of per-node message complexity,
we generalize the ``path-intersection'' concept into a ``tree-intersection'' strategy.
Instead of sending a single token on a random walk, each candidate initiates
an $\ell$-ary tree expansion of size $2\times\lceil\sqrt{n \log n}\rceil$, which
ensures that trees encounter one another w.h.p.

\subsection{An Informal Description of the Tree-Intersection Strategy}
\begin{enumerate}
\item
\emph{Trading bandwidth for time.}
The advantage of the path-intersection approach is the low per-node message complexity,
its core limitation is its sequential nature.
By instructing each node to forward the rank to $\ell$ neighbors instead of just one,
we transform the linear walk into a tree.
This allows us to cover the same ``footprint'' required for intersection
in $O(\log_\ell \sqrt{n \log n})$ rounds (for $\ell \ge 2$).
For a binary tree ($\ell=2$), this is logarithmic time; for larger branching factors,
it collapses toward constant time.

\item
\emph{The all-back verification principle.}
While the ``Birthday Paradox'' ensures intersection between every two trees,
it does not guarantee that \emph{every} path will intersect with the global maximum.
We address this issue as follows.
A candidate is elected only if it
receives acknowledgments from all the $N = 2\times\lceil\sqrt{n \log n}\rceil$ (logical) \emph{nodes} generated.
If a branch encounters a node
already dominated by a higher rank, that branch is ``pruned''.
Crucially, at least the node at which the intersection happens
sends no acknowledgment to the lower rank candidate.
This prevents a lower rank candidate from accumulating the
necessary number of acknowledgments to become a leader.
%This is true regardless of whether the higher rank token
%arrives before after or at the same time as the lower ranked token.
\item
\emph{Handling internal collisions.}
A subtle challenge arises because we select neighbors randomly (with replacement):
a tree might loop back on itself, or two branches might converge on the same node.
%To address this, we assign a \emph{target weight} that a tree should achieve.%
%The target weight is exactly the number of node of an \emph{almost complete tree}
%of size $N = 2\times\lceil\sqrt{n \log n}\rceil$ constructed during the tree expansion.
When a node is reached by multiple (say $x>1$) parents of the same tree in a given round (an internal collision),
it treats only the first one as its true parent for this round,
%(For each round a node may have at most one parent.)
sets its own weight to $x$,
and sends \emph{explore} messages to $\ell \times x$  of its neighbors.
Later when ACKs arrive, the node reports the aggregated sum of the ACKs
plus its own weight $x$ to the single true parent.
This ensures that the total weight reported to the
highest ranked candidate matches the target weight $N$.
\item
\emph{The silent pulse optimization.}
Because the tree grows layer-by-layer, every parent knows exactly when to expect an ACK from its children.
If a parent hears nothing from a child in the expected round,
it assumes that child was a \emph{duplicate} branch (i.e., an internal collision) handled by someone else,
or a pruned branch.
This allows us to eliminate explicit ``I am a duplicate'' messages,
relying instead on the ``conservation of weight'' carried by the active branches.
%This again ensures that the total weight reported to the highest ranked candidate
%matches the target weight.
\item
\emph{Satisfying the \textsf{CONGEST} constraint.}
Every node in the graph stores and propagates only the messages (both Explore and ACK) of
the highest rank it has seen,
effectively pruning branches of all but the most dominant tree as the expansions collide.
The Explore and ACK messages have three counters associated with them and as we prove,
each one of these counters require at most $O(\log n)$ bits to encode.
\item
\emph{Optimizing tree volume via almost complete tree expansion.}
A naive implementation of the $\ell$-ary expansion might force the algorithm to construct a \emph{perfect} $\ell$-ary tree (because of the simple stopping condition).
However, this introduces an undesirable multiplicative overhead:
the final level could contain up to a factor of $\ell$ more nodes compared to the desired
volume of $2\times\lceil\sqrt{n \log n}\rceil$, blowing up the total message complexity by a factor of $\ell$.
To prevent this volume explosion, our framework constructs an \emph{almost complete $\ell$-ary tree}.
The expansion halts dynamically so that the total number of (logical) nodes is exactly $2\times\lceil\sqrt{n \log n}\rceil$.
Implementing the exact \textit{stopping condition} so that the expansion ``knows'' when to halt
is one of the key ideas of our design, and is explained in
Subsection~\ref{sec:logical_tree_construction}.
\end{enumerate}

%\newpage
\subsection{How the almost-complete tree is constructed}
\label{sec:logical_tree_construction}
\newcommand{\LeafNumber}{\operatorname{Leaf}}
To avoid the multiplicative volume overhead of a perfect $\ell$-ary tree,
our algorithm dynamically truncates the tree's expansion.
We achieve this by allowing physical network nodes to concurrently simulate
multiple independent \emph{logical} tree nodes, only compressing their data
into bundled physical transmissions when random draws route them over the same communication link.
By \emph{logical tree} we mean the conceptual tree constructed by a candidate.
If two branches converge on the same physical node, this physical node hosts multiple distinct logical nodes.
The construction builds a \emph{logical} almost complete $\ell$-ary tree
with $N$ logical nodes.

Let $N = 2\left\lceil\sqrt{n\log n}\right\rceil$
be the exact target volume of the logical tree, and let $1\le \ell\le N$
be the branching parameter.
For any integer $h\ge 0$, let $T_h$ be the volume of a perfect $\ell$-ary tree of height $h$.
That is, if $\ell=1$, then $T_h=h+1$, and if $\ell\ge 2$, then
$T_h=1+\ell+\ell^2+\cdots+\ell^h=\frac{\ell^{h+1}-1}{\ell-1}$.\\
\\
\textbf{The Core Tree and the Leaf Budget.}
We define $H$ as the maximum integer $h\ge 0$ such that $T_h<N$.
If $\ell=1$, then $H=N-2$, and if $\ell\ge 2$,
then $H=\left\lceil\log_\ell((\ell-1)N+1)\right\rceil-2$.
The perfect $\ell$-ary tree of height $H$ is called the \emph{core tree}.
To reach exactly $N$ total nodes, we must add
an exact number of leaves at the last level below the core tree.
We define this remaining requirement as the node's initial budget:
$\LeafNumber(\ell,N):=N-T_H$.\\
\\
\textbf{Token Bundling and Budget Distribution.}
Random walks may cause multiple branches of the same candidate's logical tree to collide at a single physical node
$v$. If $v$ hosts $t$ logical nodes belonging to the same candidate at core depth $d$,
it also holds an aggregated budget $x$ representing the total number of leaves at the last level those
$t$ tokens are allowed to spawn.
The construction begins at the root, which hosts exactly one depth-$0$ logical node $(t=1)$,
with initial budget $x= \LeafNumber(\ell,N)=N-T_H$.

To continue the expansion while respecting \textsf{CONGEST} constraints, $v$
must simulate the next step for all $t$ tokens.
If the expansion is still within the core tree $(0 \le d <H)$,
$v$ generates $\ell \cdot t$ logical children.
Each hosted depth-$d$ logical node becomes the logical parent of exactly $\ell$
of these children, and each such child has depth $d+1$.
Each of the newly generated $\ell\cdot t$ children corresponds to an independent random draw
(with replacement) from the physical neighbors of $v$.
Node $v$ partitions the total budget $x$ sequentially among these
$\ell\cdot t$ draws. Each draw is assigned the maximum capacity of $\ell^{H-d}$
as long as sufficient budget remains.
If the remaining budget is positive but strictly less than $\ell^{H-d}$,
the next draw receives this exact remainder, and all subsequent draws are assigned a budget of $0$.

Crucially, physical bundling occurs at the outbound edge level based on these random draws.
If a specific physical neighbor $u$ is selected $k$ times during the
draws, $v$ bundles these $k$ logical children into a \emph{single} physical message sent to $u$.
This bundled message carries the candidate's rank, the token count $k$,
and the sum of the specific leaf budgets assigned to those $k$ draws.

If the expansion has reached the boundary of the core tree $(d=H)$ at a physical node
$v$ that hosts $t$ depth-$H$ logical nodes and has budget $x$,
then the $t$ logical nodes do not spawn full branches.
Instead, $v$ partitions the total budget
$x$ sequentially among these $t$ nodes.
Each hosted logical node becomes the logical parent of $\ell$
leaves at the last level as long as sufficient budget remains.
If the remaining budget is positive but strictly less than $\ell$,
the next hosted logical node receives this exact remainder, and all
subsequent hosted logical nodes receive $0$ budget.
Overall, $v$ creates exactly $x$ leaves at the last level (each with budget 0)
for its hosted logical nodes.
Each leaf at the last level is then placed by an independent random draw
from the physical neighbors of $v$,
again bundling tokens and their respective budgets
into single physical messages if the same physical neighbor is drawn multiple times.\\
\\
\textbf{Message Complexity Overhead.}
Because the network size $n$ and parameter
$\ell$ are globally known, the structural constants $N$, $H$,
and the initial budget can be computed locally by all nodes.
The global synchronous clock provides the current depth $d$.
Therefore, the only dynamic state a bundled physical message must transmit
across any specific edge is the candidate's rank, the subset of logical
tokens routed over that edge (e.g., $k$),
and the sum of the specific leaf budgets assigned to those $k$ draws.
Because the total number of logical tokens in the entire tree and the
maximum total budget never exceed $N$, these counters are bounded by
$2\left\lceil\sqrt{n\log n}\right\rceil$, and require at most $O(\log n)$ bits.

\subsection{The Algorithm}
We first define the structure of the two primary message types utilized during
the tree expansion and verification phases
before presenting Algorithm~\ref{alg:tree_election}.
\begin{itemize}
    \item \textbf{EXPLORE Message:} Formatted as $\langle \text{EXPLORE}, r_{max}, bundle, budget \rangle$.
    \begin{itemize}
        \item $r_{max}$: The highest candidate rank observed by the sending node.
        %Lower ranks are silently dropped.
        \item $bundle$: The number of independent logical tokens traversing the exact same physical edge in this round, bundled together to conserve bandwidth.
        \item $budget$: The specific portion of the leaf-allocation budget assigned to these bundled tokens,
        dictating how many leaves at the last level they are permitted to spawn.
    \end{itemize}

    \item \textbf{ACK Message:} Formatted as $\langle \text{ACK}, r_{max}, weight \rangle$.
    \begin{itemize}
        \item $r_{max}$: The dominant candidate rank being acknowledged.
        \item $weight$: The aggregated sum of valid logical nodes, including the sender's own logical presence and the returning acknowledgments from its children.
    \end{itemize}
\end{itemize}
\textbf{The Dual Role of the $explore\_cnt$ Array.}
We highlight how the $explore\_cnt$ array shifts its purpose between the algorithm's primary phases.
During \textit{Phase 2}, $explore\_cnt[i]$ acts strictly as a collision counter. It records the total number of highest-ranked logical tokens that converged on the physical node during round $i$.
During \textit{Phase 3}, this array transitions into a topological weight accumulator.
The node initially treats its stored $explore\_cnt[i]$ value as its own base weight at round $i$.
As ACK messages return at round $i$ from its neighbors, the node aggregates their reported weights
directly into its $explore\_cnt[i]$.
This effectively rolls up the total volume of the node's subtree at round $i$.
%ensuring the exact target footprint $N$ is accurately reconstructed and passed back to the root.
%
\begin{algorithm}[t!]
\caption{Tree-Intersection Leader Election in Complete Graphs}
\label{alg:tree_election}
\footnotesize
\begin{algorithmic}[1]
\Statex \textbf{The following code is executed by each node $v \in V$}
\Statex \textbf{Constants (known to all nodes):}
\Statex $G=(V, E)$ where $|V|=n$ \Comment $G$ is a complete graph
\Statex $N = 2\times\lceil \sqrt{n \log n} \rceil$ \Comment{samples volume used for Birthday Paradox}
\Statex $\ell = \text{Branching factor ($\ell \ge 1$) }$ \Comment{$\ell = 1$ for Walk, $\ell \ge 2$ for Tree}
\Statex $T(\ell,N) =$ maximum integer of size $< N$, for which there is a perfect $\ell$-ary tree
\Comment{core tree}
\Statex $H =
\begin{cases}
N-2 & \text{if } \ell = 1 \\
\left\lceil \log_\ell((\ell-1)N+1) \right\rceil -2 & \text{if } \ell \ge 2
\end{cases}$ \Comment{height of the \emph{core tree} with volume $T(\ell,N) < N$}
%\Statex $W_{target} = $ \# of leaves of an almost complete tree of size $N$
%\Comment{total \# of ACKs to verify success}
%\Statex
\Statex \textbf{State variables at node $v$:} \Comment{$H+1$ (not $H$) is the height of the $\ell$-ary tree}
\Statex $r_{v} \gets 0, r_{max} \gets 0, parent[1..H+1] \gets \bot$
\Comment{$r_{v} =$ rank of $v$, $r_{max} =$ dominant rank discovered}
\Statex $explore\_cnt[0..H+1] \gets 0$ \Comment{for each round, counts \# of highest rank incoming EXPLORE tokens}
\Statex $bundle\_cnt \gets 0$ \Comment{a branch may be selected more than once during draws}
\Statex $budget\_sum \gets 0$  \Comment{\# of leaves to be spawn at the last level}
%\Statex $weight\_sum \gets 0$  \Comment{aggregates weights returned by children}
\Statex $B\gets \emptyset$; $M\gets \emptyset$; $A\gets \emptyset$
        \Comment{$B$ multiset of edges, $M$ \& $A$ sets of EXPLORE msgs, ACK msgs}
%\Statex
\Statex \textbf{Phase 1: Candidate Selection (Round 0)}
\State Node $v$ independently becomes a \textit{candidate} with probability $p = \frac{\log n}{n}$
\If{$v$ is a candidate}
    \State $r_v\gets$ generate a random rank from $\{1, \dots, n^4\}$ \Comment{unique rank w.h.p.}
    \State $r_{max} \gets r_{v}$ ; $explore\_cnt[0]\gets 1$; $budget\_sum \gets N- T(\ell,N)$
           \Comment{ready to send tokens} %$\langle \text{EXPLORE}, r_{v}, 1, budget\_sum\rangle$}
\EndIf
%\Statex
\Statex \textbf{Phase 2: Forward Tree-Expansion (Rounds 1 to $H+1$)}
\Comment{$H+1 =$ height of the $\ell$-ary tree}
\For{round $i = 1$ \textbf{to} $H$}\Comment{for each round $i$, first send messages and then receive}
    \If{$explore\_cnt[i-1] \neq 0$}\Comment{at least one token with $r_{max}$ received in round $i-1$}
        \State $B \gets$ Select $\ell \times explore\_cnt[i-1]$ branches to neighbors u.a.r.\ with replacement
        \For{each branch $b\in \operatorname{supp}(B)$}
        \Comment{sequentially; $\operatorname{supp}(B) =$ the set of unique elements in $B$}
             \State $bundle\_cnt \gets$ \# of times $b$ appears in $B$ \Comment{bundle $bundle\_cnt$ tokens in one message}
             \If {$budget\_sum \geq bundle\_cnt\cdot \ell^{H-i+1}$}\Comment{partitions the total budget sequentially}
                 \State $budget\_temp \gets bundle\_cnt\cdot \ell^{H-i+1}$\Comment{assigned the maximum capacity}
             \Else
                 \State $budget\_temp \gets budget\_sum$\Comment{receives remainder budget}
             \EndIf
             \State $budget\_sum \gets budget\_sum - budget\_temp$ \Comment{remanning budget}
             \State Send $\langle \text{EXPLORE}, r_{max}, bundle\_cnt, budget\_temp\rangle$ via branch $b$ \Comment{EXPLORE via $b$}
        \EndFor
 %       \State $explore\_cnt \gets 0$\Comment{set to 0, the \# of incoming explorations, before counting}
    \EndIf

    \State $M \gets$ EXPLORE tokens received in this round \Comment{$budget\_sum = 0$ at this point}
    \For{each $\langle \text{EXPLORE}, r, bundle, budget \rangle \in M$}\Comment{counts \emph{sequentially} \# of high rank tokens}
        \If{$r > r_{max}$} \Comment{new dominant rank discovered}
            \State $r_{max} \gets r; parent[i] \gets \text{arrival edge from sender}$ \Comment{facilitate trace-back}
            \State $explore\_cnt[i] \gets bundle$; $budget\_sum \gets budget$
            \Comment{first high rank incoming exploration}
            \State $explore\_cnt[0..i-1] \gets 0$ \Comment{clear history}
        \ElsIf{$r = r_{max}$ \textbf{and} $parent[i]=\bot$} \Comment{loop or late arrival with same high rank}
            \State $parent[i] \gets \text{arrival edge from sender}$ \Comment{facilitate trace-back}
            \State $explore\_cnt[i] \gets bundle$; $budget\_sum \gets budget$
            \Comment{first high rank exploration in round $i$}
        \ElsIf{$r = r_{max}$ \textbf{and} $parent[i]\neq\bot$} \Comment{internal collision}
            \State $explore\_cnt[i] \gets explore\_cnt[i] + bundle$;
             $budget\_sum \gets budget\_sum + budget$
            \Else
        ~\textbf{drop} $\langle \text{EXPLORE}, r, bundle, budget\rangle$ \Comment{disqualified by a higher rank that arrived earlier}
        \EndIf
    \EndFor
\EndFor

\Statex \textbf{Phase 2: Round $H+1$}
    \If{$budget\_sum \neq 0$}\Comment{ready to spawn $budget\_sum$ new leaves}
        \State $B \gets$ Select $budget\_sum$ branches to neighbors u.a.r.\  with replacement
        \For{each branch $b\in \operatorname{supp}(B)$}\Comment{branches to spawn}
             \State $bundle\_cnt \gets$ \# of times $b$ appears in $B$ \Comment{bundle $bundle\_cnt$ tokens in one message}
             \State Send $\langle \text{EXPLORE}, r_{max}, bundle\_cnt, 0\rangle$ via branch $b$
        \EndFor
 %       \State $explore\_cnt \gets 0$\Comment{set to 0, the \# of incoming explorations, before counting}
    \EndIf

    \State $M \gets$ EXPLORE tokens received in this round \Comment{$M$ is a set of tokens}
    \For{each $\langle \text{EXPLORE}, r, bundle, 0 \rangle \in M$}\Comment{counts \emph{sequentially} \# of high rank tokens}
        \If{$r > r_{max}$} \Comment{new dominant rank discovered}
            \State $r_{max} \gets r; parent[H+1] \gets \text{arrival edge from sender}$ \Comment{facilitate trace-back}
            \State $explore\_cnt[H+1] \gets bundle$;  %\Comment{first high rank incoming exploration}
                   $explore\_cnt[0..H] \gets 0$ \Comment{update state and clear history}
        \ElsIf{$r = r_{max}$ \textbf{and} $parent[H+1]=\bot$} \Comment{loop or late arrival with same high rank}
            \State $parent[i] \gets \text{arrival edge from sender}$ \Comment{facilitate trace-back}
            \State $explore\_cnt[H+1] \gets bundle$  \Comment{first high rank incoming exploration in round $i$}
        \ElsIf{$r = r_{max}$ \textbf{and} $parent[H+1]\neq\bot$} \Comment{internal collision}
            \State $explore\_cnt[H+1] \gets explore\_cnt[H+1] + bundle$ \Comment{adds $bundle$ of high rank explorations}
            \Else
        ~\textbf{drop} $\langle \text{EXPLORE}, r, bundle, 0\rangle$ \Comment{disqualified by a higher rank that arrived earlier}
        \EndIf
    \EndFor
%\Statex \center{\textbf{The algorithm continues on the next page}}
% Before closing the first block, save the current line number state
\makeatletter
\global\edef\savedalgorithmicline{\the\value{ALG@line}}
\makeatother
\end{algorithmic}
\end{algorithm}
% STEP 1: Safely clone the macro by value (breaks the infinite recursion loop)
\makeatletter
\let\originaltheHalgorithm\theHalgorithm
\renewcommand{\theHalgorithm}{\originaltheHalgorithm-cont}
\makeatother
\begin{algorithm}[t!] % Forces the continuation to the top of the next page
\footnotesize
\addtocounter{algorithm}{-1} % Roll back the counter so the number stays identical
\caption{Tree-Intersection Leader Election in Complete Graphs (Continued)}
\begin{algorithmic}[1]
%
% Immediately restore the line counter right after opening the environment
\makeatletter
\setcounter{ALG@line}{\savedalgorithmicline}
\makeatother
\Statex \textbf{Phase 3: Trace-back and Verification (Rounds $H+2$ to $2H+2$)}
\For{round $i = 1$ \textbf{to} $H+1$} \Comment{global rounds $H+2$ to $2H+2$}
    \If{$explore\_cnt[H-i+2] \neq 0$}\Comment{if there is an ACK to send}
        \State Send $\langle \text{ACK}, r_{max}, explore\_cnt[H-i+2] \rangle$ via edge $parent[H-i+2]$
    \EndIf

    \State $A \gets$ ACKs received in this round
    \For{each $\langle \text{ACK}, r, weight \rangle \in A$} \Comment{aggregates weights returned by children + own weight}
        \If{$r = r_{max}$} $explore\_cnt[H-i+1] \gets explore\_cnt[H-i+1] + weight$
        \Else
        ~\textbf{drop} $\langle \text{ACK}, r, weight \rangle$ \Comment{disqualified by a higher rank that arrived later}
        \EndIf
    \EndFor
\EndFor

\Statex \textbf{Phase 4: Termination  (Round $2H+3$)}
\If{$v$ is a candidate \textbf{and} $explore\_cnt[0] = N$} \Comment{$r_{v} = r_{max}$}
     \State $status \gets$ \textbf{LEADER}
     \textbf{else} $status \gets$ NOT\_LEADER
\EndIf
\end{algorithmic}
\end{algorithm}
%
% STEP 3: Safely restore the original macro definition
\makeatletter
\let\theHalgorithm\originaltheHalgorithm
\makeatother

%\newpage
\subsection{Correctness and Analysis of Algorithm~\ref{alg:tree_election}}
We first formalize
two invariants of the algorithm: the \emph{timing invariant} that proves when the ACK
messages arrive, and the \emph{conservation of weight invariant} regarding
the exact preservation of logical weight despite internal collisions.

\begin{lemma}[Timing Invariant of the Silent Pulse] \label{lem:timing-invariant}
For any node $v$ that adopts a candidate's rank $r_{max}$ at round $d$ ($1 \le d \le H+1$) of Phase 2,
$v$ receives all valid ACK tokens from its spawned children precisely in round
$H+1 - d$ of Phase 3. The absence of an ACK in this round safely evaluates to an implicit weight of zero.
\end{lemma}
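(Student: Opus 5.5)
The plan is to prove the statement by pure index bookkeeping, tying Phase~2 depths to array slots of $explore\_cnt$ and $parent$, and then reading off from the Phase~3 loop exactly which slot is sent and which slot is incremented in each round. No probabilistic argument is needed: the claim is about the deterministic synchronous schedule. Throughout, ``$v$ adopts $r_{max}$ at round $d$'' means $v$ receives a highest-rank EXPLORE in round $d$ of Phase~2. By the code, $v$ then records $parent[d]$ and a nonzero $explore\_cnt[d]$. It is this slot $d$ that later holds $v$'s own weight and aggregates its children's ACKs.

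\textbf{Step 1 (children live in slot $d+1$).} If $explore\_cnt[d]\neq 0$ and $d\le H$, then $v$ sends EXPLORE messages in round $d+1$. This is the send step of iteration $i=d+1$ of the loop, or of the special round $H+1$ when $d=H$. Since the send step precedes the receive step of the same round, each child $u$ receives the message in round $d+1$. If $u$ keeps the rank, it is recorded in $u$'s slot $d+1$: either $parent[d+1]$ is set to the edge towards $v$, or the bundle is added to $explore\_cnt[d+1]$ as an internal collision. If $d=H+1$, then $v$ is a last-level leaf with no children, and the claim is vacuous: the target round is $H+1-d=0$, so $v$ receives no ACK to aggregate.

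\textbf{Step 2 (matching the Phase~3 schedule).} In Phase~3 round $i$, every node sends only the content of slot $H-i+2$, and only along $parent[H-i+2]$. In the same round, every received ACK is added to slot $H-i+1$. A child holding slot $d+1$ therefore sends in the round $i$ with $H-i+2=d+1$, that is, $i=H+1-d$. In that same round, the receiver adds incoming weights to slot $H-i+1=d$, which is precisely $v$'s slot $d$. Because all ACKs arriving in one round originate from a single depth index, an ACK can never be credited to the wrong depth of $v$. This holds even if $v$ hosts logical nodes at several depths of the same tree, which is the loop case also handled by Lemma~\ref{lem:unambiguous_retraction}.

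\textbf{Step 3 (finality and the zero default).} I would then argue by downward induction on $d$, from $H+1$ to $0$. Slot $d$ is modified in Phase~3 only during round $H+1-d$, and it is read for sending only in round $H+2-d$, which comes strictly later. Consequently, slot $d$ already holds its final aggregated value when it is forwarded, and the same holds for slot $d+1$ of each child when that child sends. Aggregation is purely additive onto the base value $explore\_cnt[d]$. Hence a child that sends nothing changes nothing, and this is equivalent to contributing weight $0$. A child sends nothing in three cases: it was absorbed as an internal collision, so its bundle is counted at the true parent only; it was pruned by a higher rank, so its slot was cleared or never set; or its ACK carries a rank different from $r_{max}$ and is dropped. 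In each case, no round other than $H+1-d$ could deliver the missing weight. This justifies the ``silent pulse'' reading.

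The main obstacle is Step~3's handling of a later overwrite. If $v$ is overtaken by a higher rank after round $d$, then $explore\_cnt[0..i-1]$ is cleared, so slot $d$ becomes $0$ and $v$ stays silent. ACKs for the old rank then arrive and are dropped by the $r=r_{max}$ test. I would check that this cannot leave stale nonzero weight in slot $d$, and that the timing statement is only claimed for nodes that retain the rank. In other words, ``valid'' ACKs should be defined as those carrying $v$'s final $r_{max}$.
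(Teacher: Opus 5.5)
Your proposal is correct and follows the paper's proof. The paper's argument is the same index match: the child records the edge in $parent[d+1]$, and the Phase~3 schedule makes it send when $H-i+2=d+1$, i.e.\ $i=H+1-d$; it then observes that absent ACKs contribute nothing. Beyond that, you also check that received ACKs are credited to slot $H-i+1=d$, that each slot is finalized before it is read, and you handle overwrites and the vacuous case $d=H+1$; the one small slip is that for $d=H$ the round-$(H+1)$ send is triggered by $budget\_sum\neq 0$ rather than by $explore\_cnt[H]\neq 0$, which does not affect the claim.
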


\begin{proof}
During Phase 2, if $v$ receives and adopts the maximum rank token at round $d$,
it records the arrival edge as $parent[d]$ and forwards EXPLORE tokens to its chosen neighbors at round $d+1$.
A neighbor $u$ that successfully receives this token will record the arrival edge as $parent[d+1]$.
During Phase 3, which executes for rounds $i = 1$ to $H+1$,
a node sends its accumulated weight
via the edge $parent[H-i+2]$. Node $u$ will therefore transmit its ACK back to $v$ when $H-i+2 = d+1$.
Because the network is strictly synchronous,
$v$ evaluates and aggregates all incoming ACKs exactly in round $i = H-d+1$.

Because $v$ expects and processes ACKs from its children strictly in round $H - d+1$,
any branch that was pruned (due to encountering a higher rank) or absorbed
(due to an internal collision) naturally drops out of the communication flow.
By treating non-arriving messages as $0$ weight, the algorithm eliminates the
need for explicit rejection messages while strictly preserving
the layer-by-layer temporal structure of the tree.
\end{proof}

\begin{lemma}[Conservation of Weight] \label{lem:conservation}
Let $v_{max}$ be the candidate with the unique global maximum rank $rank_{max}$.
In Phases 1\&2
(1) the sum of $explore\_cnt[i]$ across all nodes at the end of round
$0 \le i \le H$ is exactly $\ell^i$, and
(2) the sum of $explore\_cnt[H+1]$ across all nodes at the end of round $H+1$
is exactly $N- T(\ell,N)$ (i.e., \# of leaves at the last level)
Consequently, in Phase 3, the total weight returned to $v_{max}$ in round $2H+2$
equals $N$ (i.e., the total number of
EXPLORE tokens with the highest rank submitted plus one).
\end{lemma}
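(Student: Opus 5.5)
The plan is to prove (1) and (2) by induction on the round $i$, restricted to the tokens of rank $rank_{max}$, and then get the Phase~3 claim by a reverse induction that uses Lemma~\ref{lem:timing-invariant}. The whole argument rests on one observation. Since $rank_{max}$ is the unique global maximum, no node that has adopted $rank_{max}$ ever drops a $rank_{max}$ token. Such a node is also never reset by a larger rank, so its history $explore\_cnt[0..i-1]$ is never cleared. The only tokens that get ``lost'' are those of lower-ranked candidates. For this reason we track only the multiset of $rank_{max}$ logical tokens. We will also need the following: when a node first sees $rank_{max}$ in round $i$, the clearing of $explore\_cnt[0..i-1]$ erases only lower-rank counts. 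This holds because $rank_{max}$ tokens in round $i$ can only be sent by nodes that held $rank_{max}$ at round $i-1$.

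For (1), the base case is $i=0$: only $v_{max}$ has $explore\_cnt[0]=1=\ell^0$ for rank $rank_{max}$. For the step from $i-1$ to $i\le H$, each node $v$ with $explore\_cnt[i-1]=t>0$ draws exactly $\ell t$ branches. Its messages carry $bundle$ values summing to $\ell t$ over $\operatorname{supp}(B)$, and none of these messages is discarded at a receiver. Every receiver in round $i$ falls into one of three branches of the receive loop: first sight of $rank_{max}$, first arrival in round $i$, or internal collision. In each of these branches it adds the incoming $bundle$ to $explore\_cnt[i]$, starting from $0$. Summing over senders then gives total $\ell\cdot\ell^{i-1}=\ell^i$.

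In parallel we prove a budget invariant. At the end of round $i\le H$, the $budget\_sum$ values over all nodes sum to $\mathrm{Leaf}(\ell,N)=N-T_H$, and each node hosting $t$ depth-$i$ tokens has $budget\_sum\le t\,\ell^{H-i+1}$. Conservation follows because the sender's sequential partition removes exactly what it sends and receivers add it back; here we must check that $budget\_sum$ is zero before reception, as the comment in the code asserts. The upper bound follows from the capacity rule $bundle\_cnt\cdot\ell^{H-i+1}$, together with the fact that the total capacity $\ell^H\cdot\ell=\ell^{H+1}\ge N-T_H$ at the root is sufficient, by the definition of $H$ ($T_{H+1}\ge N$). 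Statement (2) then follows in the same way. In round $H+1$ each node spawns exactly $budget\_sum$ leaves, the bundles sum to that number, and receivers add them into $explore\_cnt[H+1]$, giving total $N-T_H$.

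For the Phase~3 consequence, we show by backward induction on $d=H+1,\dots,0$ that after round $H+1-d$ of Phase~3, the values $explore\_cnt[d]$ summed over all nodes equal $\sum_{j\ge d}$(number of tokens at depth $j$). The key ingredient is Lemma~\ref{lem:timing-invariant}: every node holding weight at index $d+1$ sends it, in a single ACK, to $parent[d+1]$, and that parent held $rank_{max}$ at round $d$. By the no-reset property, this parent still has $r_{max}=rank_{max}$, so it accepts the ACK into $explore\_cnt[d]$. No weight is duplicated, because each node sends one ACK per index, and none is lost. At $d=0$ all of this weight sits at $v_{max}$ alone, and it totals $T_H+(N-T_H)=N$.

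I expect the main obstacle to be the budget-invariant step. Specifically, I must verify that the sequential partition never leaves residual budget stranded at a node, which would require $budget\_sum\le$ capacity at every level, and that a node hosting tokens at several depths or across several rounds does not mix its $budget\_sum$ between rounds. I would first try to prove the capacity inequality directly from the definition of $H$, handling the case $\ell=1$ separately.
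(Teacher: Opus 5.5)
Your proposal is correct and follows the paper's route. The paper proves parts (1) and (2) by induction over the Phase~2 rounds, restricted to $rank_{max}$ tokens, which are never dropped and whose history is never cleared. It then uses Lemma~\ref{lem:timing-invariant} to carry the weight back to $v_{max}$ in Phase~3. Your budget invariant $budget\_sum\le t\,\ell^{H-i+1}$, with exact exhaustion at each send step, together with your backward induction over the ACK indices, makes explicit two steps the paper only asserts. The first is the leaf count in part~(2), which rests on the same invariant proved in the appendix for Lemma~\ref{lem:exact-volume-physical-budget}. The second is the loss-free aggregation in Phase~3.
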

\begin{proof}
For $0 \le i \le H$,
we proceed by induction on the round number $i$ in Phases 1\&2.\\
\textbf{Base case ($i=0$):} Candidate $v_{max}$ initializes $explore\_cnt = 1$, which trivially equals $\ell^0$.\\
\textbf{Inductive step:} Assume the sum of $explore\_cnt$ across all nodes at round $i-1$ is $\ell^{i-1}$. In round $i$
exactly $\ell \times \ell^{i-1} = \ell^i$ EXPLORE tokens with $rank_{max}$ are transmitted in the network. Because no rank in the network is strictly greater than $rank_{max}$, no node drops these tokens.
If multiple tokens arrive at the same node, this node aggregates them by incrementing
$explore\_cnt[i]$ for each token. Thus, the sum of $explore\_cnt$ at round $i$ is exactly $\ell^i$.
As for round $H+1$, exactly $N- T(\ell,N)$ EXPLORE tokens with $rank_{max}$ are transmitted, and
each node upon receiving one of these token will increment by one $explore\_cnt[H+1]$.

In Phase 3, governed by the layer-by-layer retracing established in Lemma~\ref{lem:timing-invariant},
this aggregated weight is passed back without loss.
The precise timing ensures no race conditions occur during weight aggregation,
ensuring the sum of all weights returned to $v_{max}$ equals the number of
EXPLORE tokens with $rank_{max}$ submitted plus one (the root)
which is $N$.
\end{proof}
The following technical lemma
%summarizes the properties of the logical tree construction
is used in the proofs of Lemma~\ref{lem:tree-inter} and Lemma~\ref{thm:tree-complex}.
\begin{lemma}[The Almost Complete Tree Construction]\label{lem:exact-volume-physical-budget}
For every $n \ge 2$ and every integer $1 \le \ell \le N$,
the construction described in Section~\ref{sec:logical_tree_construction}
(as implemented in Algorithm~\ref{alg:tree_election}) produces an almost complete
$\ell$-ary logical tree with exactly $N$ logical nodes.
That is, for every depth $0\le d<H$, each logical node of depth $d$
has exactly $\ell$ logical children, and every logical node of depth $H$
has at most $\ell$ logical children.
The number of supplementary leaves (at the last level) added to the core tree, namely at depth $H+1$,
is the initial budget $\LeafNumber(\ell,N)$.
\end{lemma}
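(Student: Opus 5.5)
The plan is to argue purely about the \emph{logical} tree. The physical routing, meaning which neighbors are drawn and how tokens are bundled, only groups logical tokens into messages and never changes their counts. So we first fix the abstract allocation rule and show it produces the claimed shape. The key quantity is the \emph{budget invariant}: at every depth $d\le H$, the total budget carried by all depth-$d$ logical nodes equals $\LeafNumber(\ell,N)=N-T_H$.

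First we check that this budget is feasible. By maximality of $H$ we have $T_H<N\le T_{H+1}$. Hence
\[
1\le \LeafNumber(\ell,N)\le T_{H+1}-T_H=\ell^{H+1}.
\]
This is exactly the capacity of the $\ell^{H}$ depth-$H$ slots, each of which can take $\ell$ leaves. For $\ell=1$ the same computation gives $H=N-2$ and budget $1$. We also verify that the closed form for $H$ stated for $\ell\ge2$ agrees with the definition.

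Next we prove, by induction on $d$ from $0$ to $H$, four facts:
\begin{itemize}
\item there are exactly $\ell^d$ depth-$d$ logical nodes;
\item the total budget summed over them is $\LeafNumber(\ell,N)$;
\item each individual logical node (or bundle of $k$ logical nodes) carries at most $\ell^{H-d+1}$ (respectively $k\,\ell^{H-d+1}$);
\item the budget is ``greedy-packed'': in the canonical order, all full units come first, then at most one partial unit, then zeros.
\end{itemize}
The node count is immediate from the rule that each hosted depth-$d$ node spawns exactly $\ell$ children when $d<H$; this mirrors item (1) of Lemma~\ref{lem:conservation}.

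For the budget, the sequential partition at a physical node $v$ hands out min(capacity, remaining) per draw until exhausted. It therefore never loses or creates budget, so the sum is conserved, and it never exceeds per-draw capacity. Since $v$'s incoming budget is at most $t\,\ell^{H-d+1}=\ell t\cdot \ell^{H-d}$ by the inductive per-node bound, $v$ fully distributes it. Hence $budget\_sum$ returns to $0$, as the comment in Algorithm~\ref{alg:tree_election} asserts. Merging at collisions just adds both bundle counts and budgets, which preserves the per-token bound because it is linear in the bundle size.

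At depth $H$ the same argument applies with leaf capacity $\ell$ per hosted node. The total budget $\LeafNumber(\ell,N)$ is split so that each depth-$H$ node gets at most $\ell$ leaves, and exactly $\LeafNumber(\ell,N)$ leaves are created in total. This gives $T_H+\LeafNumber(\ell,N)=N$ logical nodes.

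The main obstacle is the phrase ``almost complete''. We must show that every depth-$d<H$ node indeed has all $\ell$ children, which is automatic, and that the depth-$(H+1)$ leaves respect the per-parent cap of $\ell$. The cap needs the per-node bound carried through collisions: a physical node's budget is a sum of bundle budgets, each bounded by its bundle count times capacity. The greedy leftmost-packing property may fail globally once random bundling interleaves orders. The statement as written only requires that depth-$H$ nodes have at most $\ell$ children, so I would prove that weaker cap and not attempt to establish a leftmost ordering.
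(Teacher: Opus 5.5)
Your proposal is correct and follows essentially the same route as the paper's proof. Both rest on the per-physical-node invariant $0\le x\le t\,\ell^{H-d+1}$, which is preserved by the greedy split (the budget is exhausted because $x\le \ell t\cdot \ell^{H-d}$) and by linear summation at collisions; combined with conservation of the total budget, this yields exactly $N-T_H$ supplementary leaves with at most $\ell$ per depth-$H$ node. The differences are only cosmetic: you fold $\ell=1$ into the same induction rather than treating it as a separate case, and you are right to drop the leftmost-packing item, which the statement does not need.
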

%
%For lack of space
A  detailed proof of Lemma~\ref{lem:exact-volume-physical-budget}
appears in Appendix~\ref{app:Almost-Complete-Tree}.

\begin{lemma}[Tree Intersection Probability]\label{lem:tree-inter}
Let $N = 2\left\lceil \sqrt{n\log n}\right\rceil$
and let $\ell$ be the branching parameter such that $1 \le \ell \le N$.
In a complete graph $K_n$, let $T_1$ and $T_2$
be two independent tree expansions generated by Algorithm~\ref{alg:tree_election},
with the corresponding $\ell$-ary logical trees $T_1'$ and $T_2'$, respectively,
each of exact volume $N$. The probability that $T_1\cap T_2=\emptyset$ is at most $n^{-4}$.
\end{lemma}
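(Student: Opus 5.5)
Our plan is to follow the Birthday-Paradox argument behind Lemma~\ref{lem:RW-intersection}, and to reduce the tree case to it by looking only at the multiset of physical placements generated by each expansion. By Lemma~\ref{lem:exact-volume-physical-budget}, each logical tree $T_j'$ has exactly $N$ logical nodes. The root is a fixed (candidate) node, and each of the remaining $N-1$ logical nodes is placed by an independent uniform draw (with replacement) from the neighbors of its parent's physical host. In $K_n$, a uniform draw from the neighbors of any host $v$ hits every fixed node $z\neq v$ with probability exactly $1/(n-1)\ge 1/n$. Bundling only merges draws that already selected the same neighbor, so it changes neither the distribution nor the independence of the individual draws. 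Hence each $T_j$ is the set of physical hosts of $N$ logical nodes, $N-1$ of which are ``fresh'' uniform-type draws, and the two expansions are independent (conditioned on the candidates).

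We will first build $T_1$ completely and condition on it. One subtlety: pruning by a higher rank could stop $T_1$'s expansion early. To handle this, we analyze the \emph{intended} trees, meaning the random draws each candidate would make if nothing were pruned. This is exactly the object whose intersection matters for the all-back verification. Equivalently, we couple by pre-sampling all draws.

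We then split on the size of the set of distinct physical nodes of $T_1$.
\begin{itemize}
\item \emph{Case $|T_1|\ge N/2$.} Each of the $N-1$ fresh draws of $T_2$ lands in $T_1$ with probability at least $(|T_1|-1)/(n-1)$, since it must avoid at most its own host. These draws are independent in the sense that each conditional probability given the past is at least this bound. So
\[
\Pr[T_1\cap T_2=\emptyset]\le \Bigl(1-\tfrac{N/2-1}{n}\Bigr)^{N-1}\le \exp\!\Bigl(-\tfrac{(N/2-1)(N-1)}{n}\Bigr).
\]
With $N\ge 2\sqrt{n\log n}$ we get $(N/2)(N)\approx 2n\log n$, which gives roughly $n^{-2}$ in natural-log terms. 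To reach $n^{-4}$ we would reuse whatever constant and log-base convention Lemma~\ref{lem:RW-intersection} relied on, since the exponent computation there is identical. We would copy that calculation verbatim.
\item \emph{Case $|T_1|< N/2$.} This means many collisions inside $T_1$. We bound its probability directly: $T_1$ would need more than $N/2$ of its $N-1$ draws to hit already-occupied nodes, each with probability at most $N/n$. The probability is therefore at most $\binom{N}{N/2}(N/n)^{N/2}\le (4N/n)^{N/2}$, which is superpolynomially small.
\end{itemize}

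A cleaner alternative that avoids the case split, which we would try first, is to argue symmetrically. Interleave the draws of $T_1$ and $T_2$ into a single sequence of $2N-2$ draws. Conditioned on the past, each draw of $T_2$ hits an already-placed node of $T_1$ with probability at least (number of distinct $T_1$ nodes so far)$/n$. Then bound via the expected number of cross-hits, exactly as in a Birthday-Paradox argument for two color classes.

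The main obstacle is the dependence structure. The draws are not i.i.d.\ uniform over $V$: they exclude the current host, and the host itself depends on earlier draws. We must also justify that pruning or bundling does not bias the placement. We would handle this by conditioning on the history and using the uniform per-draw lower bound $(|S|-1)/(n-1)$ for any fixed set $S$, which is all the product bound needs. The second concern is matching the exact $n^{-4}$ constant. Here we would rely on the same parameter arithmetic as Lemma~\ref{lem:RW-intersection}, whose walks have the same $N-1$ fresh steps per side.
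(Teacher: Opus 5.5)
Your skeleton matches the paper's proof. You reveal the physical hosts of the logical nodes one at a time (the paper uses BFS order), condition on the node set $S$ of the first tree, and bound the probability that the second tree avoids $S$ by a product of conditional probabilities over its $N-1$ non-root draws. You also lower-bound $|S|$ separately. Your treatment of bundling, and of pruning via the intended unpruned trees, is consistent with the lemma's hypothesis that both logical trees have volume exactly $N$.

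\textbf{The gap.} It is quantitative, and it sits exactly where you defer to Lemma~\ref{lem:RW-intersection}: a case split at $|T_1|\ge N/2$ cannot deliver $n^{-4}$. Your first case gives $\exp\bigl(-(1-o(1))N^2/(2n)\bigr)$. Since $N^2=(4+o(1))\,n\log n$ with $\log=\log_2$ (the paper's convention), this is only about $n^{-2/\ln 2}\approx n^{-2.89}$. The calculation in Lemma~\ref{lem:RW-intersection} is not identical to yours. It uses the threshold $D\ge 3N/4$, which gives the exponent $\frac{3N(N-1)}{4(n-1)}\ge(3/\ln 2-o(1))\ln n>4.3\ln n$. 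More generally, a threshold $cN$ works in this argument only if $c>\ln 2\approx 0.69$. So you must change the split itself, not just import constants.

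\textbf{The repair} stays inside your framework: split at $3N/4$.
\begin{itemize}
\item In the first case you can use $|S|/(n-1)$ instead of $(|S|-1)/(n-1)$. Conditioning on all earlier draws of $T_2$ avoiding $S$ places the current host outside $S$; this is the paper's observation. The bound then becomes $\exp\bigl(-\tfrac{3N(N-1)}{4(n-1)}\bigr)\le n^{-4.1}$ for large $n$.
\item In the second case, $|T_1|<3N/4$ forces more than $N/4$ of the non-root draws to land on already occupied nodes. Each does so with conditional probability at most $N/(n-1)$ given the history. Your union bound then gives at most $\binom{N}{\lceil N/4\rceil}\bigl(\tfrac{N}{n-1}\bigr)^{\lceil N/4\rceil}\le\bigl(\tfrac{4eN}{n-1}\bigr)^{N/4}=e^{-\Omega(N\log n)}$. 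The paper instead shows that $|S|$ stochastically dominates $\operatorname{Bin}(N,7/8)$ and applies a Chernoff bound.
\end{itemize}
Finally, your ``cleaner alternative'' does not work as phrased. A bound on the expected number of cross-hits $X$ controls $\Pr[X\ge 1]$ from above, not $\Pr[X=0]$. To upper-bound the disjointness probability you need the product of conditional avoidance probabilities, as in your case analysis.
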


\input{Ido-4-alg-2-intersection}

\begin{theorem}[Safety] \label{thm:tree-safety}
With high probability, algorithm~\ref{alg:tree_election} elects at most one leader.
\end{theorem}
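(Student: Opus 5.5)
The plan follows the structure of the proof of Theorem~\ref{thm:safety1}, with the trace-back path replaced by the weight-aggregation argument. First I condition on the high-probability events. Ranks are unique, with failure probability below $\frac{1}{2n^2}$. The number of candidates is at most $4\log n$. By Lemma~\ref{lem:tree-inter} together with a union bound over the at most $\binom{4\log n}{2}$ pairs of candidates, every pair of trees intersects physically; the failure probability is $O(\log^2 n)\cdot n^{-4}$. If there are no candidates, no leader is elected. Otherwise let $v_{max}$ be the candidate with the unique maximum rank $rank_{max}$. I then fix an arbitrary candidate $u\ne v_{max}$ with rank $r_u<rank_{max}$ and show that $explore\_cnt[0]<N$ at $u$ at the end of Phase~3. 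Then $u$ fails the test in Phase~4 and is not elected. The statement follows because at most $v_{max}$ can be elected.

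The key step is an \emph{upper bound} counterpart to Lemma~\ref{lem:conservation} for an arbitrary candidate $u$. Consider $u$'s logical tree: by Lemma~\ref{lem:exact-volume-physical-budget} it has exactly $N$ logical nodes, as the construction would produce it if nothing were pruned. I will argue by backward induction over Phase~3 rounds that an ACK carrying rank $r_u$ sent along $parent[d]$ has weight at most the number of $u$'s logical nodes in the subtrees of the depth-$d$ logical tokens aggregated at the sender. This holds because a node only counts its own hosted tokens plus ACKs with matching rank. Lemma~\ref{lem:timing-invariant} ensures each child's ACK is counted at most once, in the single designated round. Moreover, if any depth-$d$ logical token of $u$ is dropped, the bound becomes strict. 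This happens when the token reaches a node whose $r_{max}$ exceeds $r_u$ at that moment, or when a stored $r_u$ count is later erased by the ``clear history'' line or by an ACK rank mismatch. In that case the token's own unit of weight is never reported, and its descendants were never generated. Conversely, weight can never be inflated, because budgets and bundle counts along $u$'s tree only partition quantities originating at $u$.

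It then remains to show that at least one logical token of $u$ is lost. Let $z$ be a physical node lying in both trees, guaranteed above. I split into cases on the rounds $a$ and $b$ at which $T_{max}$ and $T_u$, respectively, reach $z$. If $a\le b$, then $z$ already has $r_{max}\ge rank_{max}>r_u$ when $u$'s EXPLORE arrives, and the message is dropped. The same happens if both arrive in the same round with the higher rank processed first; if the order is reversed, the higher rank overwrites the entry. If $a>b$, then $z$ sets $r_{max}\gets rank_{max}$ and clears $explore\_cnt[0..a-1]$, which erases $u$'s stored weight at index $b$. Alternatively, in Phase~3, $z$'s ACK would carry $r_{max}=rank_{max}\ne r_u$ and be dropped at its $u$-parent. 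Even if $z$ later received $u$-ACKs, it compares them to $r_{max}=rank_{max}$ and drops them. Hence weight is lost in every case, $u$ receives total weight at most $N-1$, and $u$ is not elected.

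I expect the main obstacle to be making the ``at most, with strict loss'' induction rigorous in the presence of bundling and internal collisions. A single physical node may host tokens of $u$ at several depths, and the $explore\_cnt$ array doubles as a counter and an accumulator. I would first formalize a per-round mapping from each physical ACK to the multiset of $u$-logical nodes it represents. I would then check that every line of Phase~3 preserves the bound. I would also verify the subtle case where the clearing of history at $z$ happens after $z$ has already spawned children for $u$. Those children's ACKs then find $r_{max}\ne r_u$ at $z$, or at least $z$ itself contributes no weight, so the loss persists.
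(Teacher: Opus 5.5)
Your route is the paper's: find an intersection node $z$, note that $rank_{max}$ dominates $r_u$ at $z$, and conclude that a unit of weight never reaches $u$. You are right that Lemma~\ref{lem:conservation} only concerns $v_{max}$'s tree, so an upper-bound analogue for $u$ is needed; the paper simply cites the conservation lemma at this point. Your union bound and your case split on $a,b$ are also more careful than the paper's argument.

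\textbf{The per-ACK invariant is false.} Internal collisions break it. Suppose a physical node $q$ hosts depth-$(d+1)$ tokens of $u$ whose logical parents sit at two different physical nodes $y_1\neq y_2$. Then $q$ sends its entire aggregated weight along $parent[d+1]$ only, i.e.\ to the sender of the first-processed message, say $y_1$. The depth-$d$ ACK of $y_1$ then also carries a child of a token hosted at $y_2$, together with that child's whole subtree. So it exceeds the number of $u$'s logical nodes in the subtrees of the tokens hosted at $y_1$, and a backward induction on your per-ACK bound cannot go through.

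What is true, and suffices, is a global charging argument. Take any pair (node $y$, round $d$) whose final $r_{max}$ is $r_u$. Its own count consists of distinct logical nodes of $u$ received by $y$ in round $d$, since lower-rank counts were cleared. Each such pair forwards its weight along the single edge $parent[d]$, and weight is accepted only under a matching rank. Hence these pairs form a forest, and the weight reaching $u$'s $explore\_cnt[0]$ is at most the number of $u$'s logical nodes that were generated and never dropped or cleared. At most $N$ are ever generated, because budgets are only split or discarded. So losing a single token leaves at most $N-1$.

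\textbf{The case where $u$ itself is captured is not covered.} Your bound controls only ACKs carrying $r_u$. This misses the subcase $z=u$, and more generally any case where $u$ is reached by a higher-ranked tree. Then $u$'s final $r_{max}$ exceeds $r_u$. So $u$ drops $r_u$-ACKs but accepts ACKs carrying the higher rank into $explore\_cnt[0]$, while Phase~4 as written tests only $explore\_cnt[0]=N$. You must either read the comment $r_v=r_{max}$ as part of that test, or show that such ACKs have weight $0$.

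For the latter, note that every EXPLORE along a $parent$-chain starting at $u$ carries $r_u$. This is because a node's $parent[k]$ is fixed by the rank it holds at the end of round $k$, which is exactly the rank it forwards in round $k+1$. Hence any node on the chain whose final rank is higher adopted that rank in a later round, which cleared its own count at that index. Backward induction from depth $H+1$, where no later adoption is possible, then shows these ACKs carry zero weight and are never sent.

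\textbf{State the coupling explicitly.} Lemma~\ref{lem:tree-inter} concerns unpruned expansions, so fix all random draws in advance. Then either $u$'s actual tree coincides with its unpruned expansion, so $T_u$ really reaches $z$, or some token of $u$ was already lost.
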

\begin{proof}
Let $v_{max}$ be the candidate with the unique maximum rank $rank_{max}$. Suppose for contradiction that another candidate
$u \neq v_{max}$ with rank $r_u < rank_{max}$ is elected.
For $u$ to be elected, it must evaluate $explore\_cnt[0] =N$ at round $2H+3$.
By Lemma~\ref{lem:tree-inter}, $u$'s tree expansion and $v_{max}$'s tree expansion intersect at some node $z$ with high probability. Because expansions proceed synchronously layer-by-layer, node $z$ updates its local $r_{max}$ to $rank_{max}$.
(Any subsequent or concurrent EXPLORE or ACK token carrying $r_u$ arriving at $z$ will trigger the ``drop'' condition.)
Thus, node $z$ will never send an ACK message to $u$.
By Lemma~\ref{lem:conservation},
the lost weight from this unsent ACK by $z$ is never returned to $u$, strictly bounding $u$'s final
$explore\_cnt[0] < N$. Thus, $u$ cannot be elected.
\end{proof}

\begin{theorem}[Liveness] \label{thm:tree-liveness}
With high probability, exactly one leader is elected.
\end{theorem}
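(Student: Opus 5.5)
The plan follows the template of Theorem~\ref{thm:liveness1}: show that some candidate exists, show that the maximum-rank candidate $v_{max}$ always collects full weight $N$, and then invoke Theorem~\ref{thm:tree-safety} to exclude every other candidate.

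\textbf{Step 1: at least one candidate, with a unique maximum rank.} Each node becomes a candidate independently with probability $p=\frac{\log n}{n}$. The probability that no node is a candidate is $(1-p)^n\le e^{-\log n}$, which is polynomially small; if needed, the constant in $p$ can be scaled to meet the chosen exponent $c$. By the footnote to Theorem~\ref{thm:safety1}, ranks are unique with probability at least $1-\frac{1}{2n^2}$. Hence w.h.p.\ there is a unique candidate $v_{max}$ with rank $rank_{max}$.

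\textbf{Step 2: $v_{max}$ is elected.} No rank exceeds $rank_{max}$, so every EXPLORE message carrying $rank_{max}$ satisfies one of the branches $r>r_{max}$ or $r=r_{max}$ at its receiver and is never dropped. Node $v_{max}$ itself keeps $r_{max}=rank_{max}$ throughout, so its $explore\_cnt[0]$ is never cleared.

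One subtlety must be checked. When a node first adopts $rank_{max}$ at round $i$, it zeroes $explore\_cnt[0..i-1]$. This discards only entries belonging to lower ranks; those entries could not hold $rank_{max}$ weight, because the node had not yet seen $rank_{max}$ before round $i$. The clearing also wipes any leftover lower-rank $budget\_sum$. So $v_{max}$'s budgets propagate intact, and this is exactly the setting of Lemma~\ref{lem:exact-volume-physical-budget}: the logical tree of $v_{max}$ is an almost complete tree of volume $N$.

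In Phase~3, every ACK carrying $rank_{max}$ arrives at a node whose $r_{max}$ equals $rank_{max}$. The sender received its EXPLORE from that node, so the receiver had already adopted $rank_{max}$, and no higher rank exists to overwrite it. Therefore no such ACK is dropped. By Lemma~\ref{lem:timing-invariant}, each ACK arrives in precisely the round the parent aggregates into the correct index. By Lemma~\ref{lem:conservation}, the weight accumulated at $v_{max}$ in round $2H+2$ is exactly $N$. Thus $v_{max}$ sees $explore\_cnt[0]=N$ in Phase~4 and becomes LEADER.

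\textbf{Step 3: combine.} By Theorem~\ref{thm:tree-safety}, w.h.p.\ no other candidate reaches $explore\_cnt[0]=N$, and non-candidates never become leaders. A union bound over the failure events of Steps~1 and~2, the rank-collision event, and the intersection failure of Lemma~\ref{lem:tree-inter} (taken over at most $4\log n$ candidates, each failing with probability at most $n^{-4}$) gives exactly one leader w.h.p.

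\textbf{Main obstacle.} The hard part is Step~2's claim that no $rank_{max}$ weight is lost or misindexed at nodes that switched rank mid-execution. This interacts with the $parent[i]$ and $budget\_sum$ bookkeeping and with the line $parent[i]\gets$ in round $H+1$, which should read $parent[H+1]$. I would handle it by arguing per round index. For every index $j$ at or after the round at which the node adopted $rank_{max}$, the entries $explore\_cnt[j]$ and $parent[j]$ are written only by $rank_{max}$ messages. This holds because lower-rank messages are dropped after adoption, and adoption clears all earlier entries. Lemma~\ref{lem:conservation} then applies verbatim to these entries.
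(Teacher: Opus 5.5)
Your proposal is correct and takes essentially the same approach as the paper: the maximum-rank tokens are never dropped, Lemma~\ref{lem:conservation} together with Lemma~\ref{lem:timing-invariant} gives $v_{max}$ weight exactly $N$, and Theorem~\ref{thm:tree-safety} excludes every other candidate. Your extra bookkeeping only makes explicit what the paper's three-line proof leaves implicit: the existence of a candidate, the per-index argument that adoption-time clearing erases only lower-rank entries, and reading the round-$(H+1)$ assignment as $parent[H+1]$.
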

\begin{proof}
The candidate $v_{max}$ possesses the global maximum rank. Therefore, its tokens can never trigger the drop condition at any node.
By Lemma~\ref{lem:conservation}, the ``silent pulse'' aggregation successfully prevents weight loss during internal collisions. Consequently, $v_{max}$ receives exactly $N$ acknowledgments (by weight) in Phase 3,
fulfilling the termination condition. Combined with Theorem~\ref{thm:tree-safety}, $v_{max}$ is the unique leader w.h.p.
\end{proof}

\begin{theorem}[\textsf{CONGEST}] \label{thm:tree-congest}
The constraint of the \textsf{CONGEST} model is satisfied.
\end{theorem}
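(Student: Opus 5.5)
We need to show that in every round each message sent over any edge consists of $O(\log n)$ bits and that each node sends at most one message per edge per round. The plan is to bound the fields of the two message types separately, then use the structure of Phases 2 and 3 to show no edge ever carries more than one message in one round.

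\textbf{Step 1: field sizes.}
An EXPLORE message $\langle \text{EXPLORE}, r_{max}, bundle, budget\rangle$ has three numeric fields.
\begin{itemize}
\item The rank $r_{max}$ lies in $\{1,\dots,n^4\}$, so it needs $O(\log n)$ bits.
\item For $bundle$: a message carries only tokens of a single rank $r_{max}$. By Lemma~\ref{lem:exact-volume-physical-budget}, the logical tree of any one candidate has exactly $N$ logical nodes, so any bundle of that candidate's tokens has $bundle \le N$. This holds even for a non-maximal candidate: its tokens are a subset of its own logical tree, and dropping tokens only decreases counts.
\item For $budget$: the initial budget is $N-T(\ell,N)\le N$, and budgets are only split among children or merged back within the same tree. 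So every $budget\_sum$ and $budget\_temp$ is at most $N$.
\end{itemize}
An ACK message $\langle \text{ACK}, r_{max}, weight\rangle$ carries a rank and a weight.
\begin{itemize}
\item The weight is a partial sum of $explore\_cnt$ values of one candidate's tree.
\item By the counting in Lemma~\ref{lem:conservation}, adapted to any candidate as an upper bound, the total weight never exceeds $N$.
\end{itemize}
Since $N = 2\lceil\sqrt{n\log n}\rceil$, every field fits in $O(\log n)$ bits. The intermediate values $\ell^{H-i+1}$ need not be sent, because they are computed locally.

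\textbf{Step 2: one message per edge per round.}
In Phase 2, round $i$, a node sends only for its single current $r_{max}$. The branches are grouped by $\operatorname{supp}(B)$, so each distinct neighbor receives exactly one bundled message. A node that has switched to a higher rank clears $explore\_cnt[0..i-1]$, so it never also sends stale lower-rank tokens.

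In Phase 3, round $i$, a node sends at most one ACK, namely along $parent[H-i+2]$. Lemma~\ref{lem:timing-invariant} fixes the round in which each ACK is expected. Hence every edge carries at most one message per direction per round.

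\textbf{Main obstacle.}
The delicate point is justifying $bundle,budget,weight\le N$ for non-maximal candidates. Lemma~\ref{lem:conservation} is stated only for $v_{max}$. I would prove an invariant by induction on rounds: for each rank $r$, the total count and budget of rank-$r$ tokens present across all nodes is at most the corresponding quantity in the idealized tree of Lemma~\ref{lem:exact-volume-physical-budget}.
\begin{itemize}
\item Dropping a message only removes tokens.
\item Merging via internal collisions only redistributes tokens without creating new ones.
\end{itemize}
Any single field is therefore at most the total, which is at most $N$. Rank uniqueness w.h.p. ensures that tokens of different candidates are never conflated.
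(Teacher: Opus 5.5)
Your proposal is correct and follows essentially the same route as the paper: the rank needs $\lceil\log_2(n^4)\rceil = O(\log n)$ bits, the $bundle$, $budget$ and $weight$ counters are bounded by $N = 2\lceil\sqrt{n\log n}\rceil$, and each edge carries at most one (bundled) message per round, because a node only forwards its single current $r_{max}$ and sends at most one ACK per round along $parent[H-i+2]$. Your version is more careful than the paper's, which simply asserts the bound $N$ on the counters without treating non-maximal candidates; you could also drop the appeal to rank uniqueness, since with colliding ranks every counter is still at most $|C|\cdot N \le nN$, so the $O(\log n)$-bit bound holds deterministically.
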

\begin{proof}
%The CONGEST model requires that each edge transmits at most $O(\log n)$ bits per round.
Algorithm~\ref{alg:tree_election} utilizes two types of messages:
EXPLORE and ACK, which carries a rank value.
Because candidate ranks are chosen from the domain $\{1, \dots, n^4\}$, representing a rank requires exactly $\lceil \log_2(n^4) \rceil = 4 \log_2 n$ bits, which is $O(\log n)$. Because every node only forwards the single maximum rank it has seen, an edge carries at most one token per round.
In Phase 2\&3,
Explore and ACK has the formats
$\langle \text{EXPLORE}, r_{max}, bundle, budget\rangle$, and
$\langle \text{ACK}, r_{max}, weight \rangle$.
The rank $r_{max}$ again takes $O(\log n)$ bits. The maximum possible values for the
$bundle\_cnt, budget\_temp$ and $explore\_cnt[i]$ variables
is bounded by $N$.
Since $N \approx 2\sqrt{n \log n}$,
the number of bits required to represent these variables is at most
$\lceil \log_2(2\sqrt{n \log n}) \rceil \approx \frac{1}{2}\log_2 n + \frac{1}{2}\log_2(\log n) + 1$.
This is strictly bounded by $O(\log n)$ bits.

Since both the EXPLORE and ACK tokens require $O(\log n)$ bits to encode,
and the pruning mechanism ensures at most one such token is sent across any specific edge in a single round,
the algorithm strictly adheres to the \textsf{CONGEST} bandwidth constraints.
\end{proof}

\begin{theorem}[Complexity Bounds] \label{thm:tree-complex}
With high probability, Algorithm~\ref{alg:tree_election} achieves:
\begin{enumerate}
    \item \textbf{Time Complexity:}
     $O(\sqrt{n \log n})$
     rounds if
     $\ell = 1$,
     and
     $O(\log_\ell \sqrt{n \log n})$
     rounds if
     $\ell \ge 2$.
    \item \textbf{Total Message Complexity:}
    $O(\sqrt{n} \log^{1.5} n)$
    messages.
    \item \textbf{Per-Node Message Complexity:}
    $O(\ell)$ messages.
\end{enumerate}
\end{theorem}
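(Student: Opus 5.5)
The plan is to prove the three bounds separately. Time is a direct count of rounds from the pseudocode. The total message count reduces to counting logical tokens per candidate and then summing over candidates. The per-node count is the hard part: it needs a concentration argument on how many logical tokens any single physical node receives.

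\textbf{Time.} The algorithm has a fixed schedule: Round $0$, Phase 2 in rounds $1,\dots,H+1$, Phase 3 in rounds $H+2,\dots,2H+2$, and termination in round $2H+3$. So the time is exactly $2H+3$. For $\ell=1$ we have $H=N-2$, which gives $2N-1=O(\sqrt{n\log n})$. For $\ell\ge2$ we have $H=\lceil\log_\ell((\ell-1)N+1)\rceil-2$. Since $(\ell-1)N+1\le \ell N$, this gives $H\le \log_\ell N-1+1=O(\log_\ell N)$, and $\log_\ell N=O(\log_\ell\sqrt{n\log n})$. When $\ell$ is close to $N$, the additive constant dominates and we get $O(1)$. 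This case is consistent with the stated bound only if it is read as $O(1+\log_\ell\sqrt{n\log n})$, and I will note this.

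\textbf{Total messages.} By Lemma~\ref{lem:exact-volume-physical-budget}, each candidate's logical tree has exactly $N$ logical nodes. Each physical EXPLORE message carries a bundle of at least one logical token, and EXPLORE messages carrying a rank are sent only by nodes that currently hold that rank as $r_{max}$. So the number of EXPLORE messages attributable to one candidate is at most $N-1$. Each ACK is sent along $parent[\cdot]$ only when $explore\_cnt\neq0$, and there is at most one ACK per (node, round) pair that received an EXPLORE. So the number of ACKs is also at most the number of EXPLORE messages received. Hence a single candidate generates $O(N)$ messages. By a Chernoff bound, the number of candidates is at most $4\log n$ w.h.p. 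Multiplying gives $O(\log n\cdot\sqrt{n\log n})=O(\sqrt n\log^{1.5}n)$.

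\textbf{Per-node messages.} This step is the main obstacle. A node $v$ sends at most one batch of EXPLOREs per round in which it holds positive $explore\_cnt[i-1]$. The size of that batch is at most $\ell\cdot explore\_cnt[i-1]$ for $i\le H$, and at most $budget\_sum\le \ell\cdot explore\_cnt[H]$ in round $H+1$. Node $v$ also sends at most one ACK per level. So the count for $v$ is $O\!\left(\ell\sum_i X_i(v)+H\right)$, where $X_i(v)$ is the number of logical tokens, over all candidates, that reach $v$ at depth $i$. (Dropped tokens are received but not forwarded, and they still count as received.) The key claim is that $\sum_i X_i(v)=O(1)$ w.h.p. for every $v$.

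To prove the claim, I would note that every logical token is placed by an independent uniform draw among the $n-1$ neighbors. The total number of logical tokens over all candidates is at most $4\log n\cdot N=O(\sqrt n\log^{1.5}n)$. So the expected load on a fixed $v$ is $O(\log^{1.5}n/\sqrt n)=o(1)$. A Chernoff bound for the upper tail, $\Pr[X\ge k]\le (e\mu/k)^k$, with a constant $k$ gives probability $n^{-\Omega(k)}$. Choosing $k$ large enough and taking a union bound over the $n$ nodes makes the load $O(1)$ everywhere w.h.p.

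Under this claim the $\ell$-term is $O(\ell)$. The additive $H$ term from ACKs also vanishes, because a node sends ACKs only at levels where it received tokens, and there are $O(1)$ of those. In the same way, ACKs received are bounded by EXPLOREs sent, which is $O(\ell)$. This yields $O(\ell)$ per node. The delicate point is independence: the draws are adaptive, since the number of draws depends on earlier outcomes. I would handle this by arguing with a fixed upper bound $4\log n\cdot N$ on the total number of draws and using the per-draw independence, which amounts to a stochastic domination by a binomial.
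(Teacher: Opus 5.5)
Your proposal is correct and follows essentially the same route as the paper. Time comes from counting rounds, the total message bound is $|C|\le 4\log n$ times $O(N)$ logical tokens per tree, and the per-node $O(\ell)$ bound comes from stochastic domination by $\operatorname{Bin}(4N\log n,1/(n-1))$, a constant-threshold tail bound with a union bound over nodes, and charging sent/received EXPLORE and ACK units to received tokens; the paper bounds the tail by a direct union bound over $6$-subsets instead of the $(e\mu/k)^k$ Chernoff form. Your caveat about reading the time bound as $O(1+\log_\ell\sqrt{n\log n})$ is unnecessary, since $\ell\le N$ keeps $\log_\ell\sqrt{n\log n}$ bounded below by a positive constant, which absorbs the additive term.
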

\input{Ido-5-alg-2-complexity.tex}

\section{Related Work}
\label{sec:related_work}

Symmetry breaking and leader election (LE) are among
the most extensively studied problems in distributed computing.
A critical nuance in the history of the LE problem is the distinction between
\textit{explicit} and \textit{implicit} LE. In the explicit version,
every node must terminate knowing the identifier of the elected leader.
This requirement naturally leads to a message complexity
of $\Omega(n)$, as the leader's identity must be disseminated to every node.

Early research focused heavily on solving deterministically the explicit version in the asynchronous model. The seminal work of Gallager, Humblet, and Spira (GHS) \cite{GHS83} established a foundational baseline for general graphs, achieving $O(m + n \log n)$ total messages and $O(n \log n)$ time for minimum spanning tree construction and explicit LE in general graphs,
where $m$ is the number of edges.
The $\Omega(n \log n)$ term represents the inescapable cost of deterministic symmetry breaking,
as first demonstrated on the ring topology \cite{Burns80, Pachl82}.
The lower bounds $\Omega(m)$ (and $\Omega(D)$ for time complexity, where $D$ is the diameter), were proven in \cite{KuttenJACM2015},
and hold also for randomized algorithms in the synchronous model.

To achieve sublinear message complexity, two shifts were necessary:
the introduction of randomization and the move toward \textit{implicit} LE.
Recent advancements in the study of implicit LE have focused
on optimizing total message complexity and execution time.
Kutten et al. \cite{Kutten2015} introduced a synchronous randomized algorithm with
sublinear total message complexity, and
an almost matching lower bound
showing that $\Omega(\sqrt{n})$ messages are needed for any randomized LE
algorithm that succeeds with probability at least $1/e + \epsilon$, for any small constant $\epsilon >0$.
The approach of \cite{Kutten2015} inherently requires high local
bandwidth (a massive concurrent broadcast).
Our work is the first to parameterize and optimize the per-node
message complexity while maintaining strictly sublinear total communication.

In \cite{GRS2018}, a randomized synchronous algorithm is presented
that solves implicit LE in a general network $G$ with $O(t_{\mathrm{mix}}(G) \cdot \sqrt{n} \log^{3.5} n)$
messages and in $O(t_{\mathrm{mix}}(G) \cdot \log^{2} n)$ time, but has high local
per-node message complexity.
Correspondingly, a nearly matching lower bound is presented, showing that
$\Omega(\sqrt{n})/{(\phi)^{3/4}}$ messages are needed for any implicit LE algorithm that succeeds
with probability at least $1 - o(1)$,
where $\phi$ refers to the conductance of a graph.
A key difference between \cite{GRS2018} and \cite{Kutten2015} is that in \cite{GRS2018}
it is \emph{not} assumed that the mixing time is known, which significantly
complicates the problem.

In \cite{KM2021},
a randomized synchronous algorithm is presented
that solves LE in
general network $G$ with
$O(\min\{\sqrt{n\cdot t_{\mathrm{mix}}(G)/ \phi}, n/(\phi \log n)\})$ messages
and $O(t_{\mathrm{mix}}(G)~\log n + \phi^{-1} \log^3 n)$ time
assuming the conductance and the mixing time are known.

As for asynchronous complete graphs, a randomized LE algorithm that achieves
sublinear $\sqrt{n} \log^{1.5} n$  message complexity w.h.p. and $O(\log n)$
time complexity, but has high local
per-node bandwidth, was presented in \cite{EKRT2025}.

%Per-node message complexity was investigated in the context of the consensus problem.
%In \cite{AAKS2018}, a randomized consensus is proposed with expected message complexity
%$O( n^2 \log ^2 n )$ against a strong (adaptive) adversary,
%in an asynchronous message-passing model in which less than $n / 2$ processes may fail by crashing.
%The algorithm is also \emph{locally-efficient}, ensuring that no process sends or receives
%more than $O( n \log ^3 n )$ expected messages.

%\section{Discussion and Future Work}
\section{Discussion}
\label{sec:discussion}

\textbf{Leader election in general graphs.}
In Section~\ref{sec:Tree-Intersection:CompleteGraphs},
we established the tree-intersection strategy for complete graphs,
where a single random hop from any node results in a perfectly uniform sample from the entire network.
In a general graph, this property no longer holds; a single hop moves a message only to a direct neighbor,
keeping the exploration strictly localized. It is tempting to adapt Algorithm~\ref{alg:tree_election}
by introducing a \emph{logical-to-physical mapping} of the tree structure.
The core idea is to treat every logical edge
%in the $\ell$-ary tree from Algorithm~\ref{alg:tree_election}
as a physical random walk of length $t_{\mathrm{mix}}(G)$ -- the mixing time of $G$.%
\footnote{The mixing time, $t_{\mathrm{mix}}(G)$, of an $n$-node graph $G$, is defined as the number of steps required for the random walk distribution to get $1/(2n)$-close to the stationary distribution $\pi$, with respect to the maximum norm, regardless of the starting node.}
In such a \emph{stretched tree}, a candidate initiates $\ell$ independent random walks.
Only after $t_{\mathrm{mix}}(G)$ physical rounds do the endpoints of these walks act as ``logical children''
and branch out into another $\ell$ random walks each.

While conceptually elegant, formalizing this intuition is highly non-trivial.
Unlike the complete graph where the stationary distribution is uniform,
random walks in general graphs converge to a distribution proportional
to node degrees ($\pi(v) = \frac{d(v)}{2m}$).
This skew means that dense areas of the graph are sampled disproportionately,
which heavily alters the intersection dynamics and complicates the Birthday Paradox bounds.
Furthermore, one must carefully untangle the probabilistic correlations introduced by
the branching nature of the walks.

Such a transformation decouples the graph's topology from the intersection probability.
Since we replace every logical round with $t_{\mathrm{mix}}(G)$ physical rounds,
the time complexity scales to $O(t_{\mathrm{mix}}(G) \cdot \log_\ell \sqrt{n \log n})$ for $\ell \ge 2$, and
the total message complexity scales to $O(t_{\mathrm{mix}}(G) \cdot \sqrt{n} \log^{1.5} n)$.

However, while this transformation successfully achieves sublinear total message complexity,
the per-node message complexity would degrade, potentially scaling as high as the total message complexity itself.
This theoretical wall arises inevitably in graphs with severe structural bottlenecks.
Consider a double star graph: even if we simulate logical tree expansions using random walks,
one of the two center nodes lies on every path of length greater than one.
Therefore, if the algorithm requires a total message complexity of $M$
to ensure tree intersection, the center nodes are forced to process $\Omega(M)$ messages.
In such topologies, the maximum per-node message complexity
binds to the total message complexity, rendering our algorithmic dial, $\ell$,
entirely ineffective at balancing the local communication load.\\
\\
\textbf{Our central contribution.}
We presented a generalized framework for randomized implicit leader election
in synchronous complete graphs under the \textsf{CONGEST} model.
Our central contribution is the branching factor $\ell$, which allows system designers
to smoothly navigate the trade-off between time complexity ($O(\log_\ell \sqrt{n \log n})$ for $\ell \ge 2$)
and per-node message complexity ($O(\ell)$),
while maintaining a strictly sublinear total message complexity.
This framework unifies and generalizes previous state-of-the-art bounds,
demonstrating that single-path intersections and massive concurrent broadcasts
are merely two extremes of the same algorithmic spectrum.\\
\\
\textbf{Per-node complexity vs.\ per-edge complexity.}
As motivated in the Introduction, a natural consequence of our framework is its inherent
guarantee regarding total per-edge message complexity. While the \textsf{CONGEST}
model naturally prevents \textit{per-round} link congestion,
our total per-node bound provides a structural guarantee over the
lifetime of the algorithm. For instance, an algorithm where a high-degree node sends
exactly one message per incident edge keeps the total per-edge complexity minimal,
yet allows the total per-node complexity to scale prohibitively with the node's degree.
By optimizing the per-node metric instead, our framework ensures that no single
node is exhausted and, as a direct corollary, no single communication link is
overwhelmed during the entire algorithm.\\
\\
\textbf{Open questions.}
Several compelling open questions remain for future investigation:
\begin{itemize}
\item \textit{Beyond Complete Graphs:} Establishing a parameterized trade-off between time and per-node
      message complexity for broader graph classes, such as regular graphs,
      remains a compelling direction for future research.
\item \textit{Parameterized Lower Bounds:} An immediate question is whether the trade-off established by $\ell$ is strictly optimal. Deriving a matching parameterized lower bound would prove that this latency-bandwidth exchange is an inherent property of symmetry breaking.
\item \textit{Fault Tolerance:} Extending our algorithm to be robust against node crashes or link failures
    would be valuable.
\item \textit{Explicit Leader Election:} Coupling our symmetry-breaking mechanism, which already maintains
low per-node complexity, with an efficient broadcast protocol could yield highly optimized results
for the explicit variant of the problem.
%\item \textit{Per-node message complexity:}
%Establishing tight upper and lower bounds on the per-node message complexity
%for more general topologies would be interesting.
\end{itemize}

\appendix
\section{Proof of Lemma~\ref{lem:exact-volume-physical-budget}}
\label{app:Almost-Complete-Tree}

\input{Ido-3-alg-2-logical-tree-analysis.tex}

\newpage
\bibliography{election.bib}

%\newpage
%\input{clique}

\end{document}

%% file: Ido-0-alg-1-intersection.tex
\begin{proof}
Let
$X_1,\dots,X_N$
and
$Y_1,\dots,Y_N$
be the
$N = L + 1 =2\left\lceil \sqrt{n\log n}\right\rceil$
nodes visited by
$W_1$
and
$W_2$,
respectively, in the order in which they are visited, where
$X_1$
and
$Y_1$
are the starting nodes of the walks.
Define
$S_X:=\{X_1,\dots,X_N\}$
and
$D:=|S_X|$.
The event
$W_1\cap W_2=\emptyset$
is equivalent to
$\forall j\in [N]:Y_j\notin S_X$.
Hence, by the chain rule, for every realization
$S_X=S$,
\[
\Pr[W_1\cap W_2=\emptyset\mid S_X=S]
=
\prod_{j=1}^N
\Pr\!\left[
Y_j\notin S
\;\middle|\;
Y_1,\dots,Y_{j-1}\notin S,\ S_X=S
\right].
\]
Let us bound each term in the product.
For
$j=1$,
we use the trivial upper bound
$1$.
For every
$j\ge 2$,
under the above conditioning the current step
$Y_{j-1}$
of
$W_2$
lies outside
$S$.
Let
$d=|S|$.
Since
$S$
contains
$d$
distinct nodes, the events of moving to each of these nodes in the next step
$Y_j$
are disjoint.
Thus, the probability that the walk enters
$S$
in the next step is exactly
$d/(n-1)$.
Therefore,
\[
\Pr[W_1\cap W_2=\emptyset\mid S_X=S]
\le
1\cdot
\left(1-\frac{d}{n-1}\right)^{N-1}
\le
\exp\left(-\frac{d(N-1)}{n-1}\right).
\]
Averaging the preceding bound over all realizations
$S_X=S$
with
$|S|=d$,
we get
\[
\Pr[W_1\cap W_2=\emptyset\mid D=d]
\le
\exp\left(-\frac{d(N-1)}{n-1}\right).
\]
It remains to lower-bound
$D$.
For
$t\in[N]$,
let
$I_t=\mathbf{1}\!\left[X_t\notin\{X_1,\dots,X_{t-1}\}\right]$.
Then
$D=\sum_{t=1}^N I_t$.
Reveal
$X_1,\dots,X_N$
sequentially.
For
$t=1$,
we have
$I_1=1$.
For
$t\ge 2$,
if by step
$t-1$
the walk has visited
$k$
distinct nodes, then
$X_{t-1}$
is one of these
$k$
nodes, and among the
$n-1$
possible choices for
$X_t$,
at most
$k-1$
are old.
Since
$k\le t-1\le N = o(n)$,
for all sufficiently large
$n$,
\[
\Pr[I_t=1\mid X_1,\dots,X_{t-1}]
\ge
1-\frac{k-1}{n-1}
\ge
\frac{7}{8}.
\]
Hence
$D$
stochastically dominates
$\operatorname{Bin}(N,7/8)$.\footnote{This is the standard coupling: if
$q_t=\Pr[I_t=1\mid X_1,\dots,X_{t-1}]$,
then
$q_t\ge 7/8$.
Using independent
$U_t\sim\operatorname{Unif}[0,1]$,
one may realize
$I_t=\mathbf{1}[U_t\le q_t]$,
while
$B_t=\mathbf{1}[U_t\le 7/8]$
are independent Bernoulli
$(7/8)$
variables and satisfy
$B_t\le I_t$
for all
$t$.}
By a Chernoff bound,
\[
\Pr[D<3N/4]
\le
\Pr[\operatorname{Bin}(N,7/8)<3N/4]
\le
\exp\left(-\frac{(1/7)^2\cdot (7N/8)}{2}\right)
\le
e^{-\Omega(N)}.
\]
Using the bound conditioned on
$D=d$
and splitting according to whether
$D<3N/4$,
%we obtain
\[
\Pr[W_1\cap W_2=\emptyset]
\le
e^{-\Omega(N)}
+
\exp\left(-\frac{3N(N-1)}{4(n-1)}\right).
\]
Since
$N=2\left\lceil \sqrt{n\log n}\right\rceil$,
\[
\frac{3N(N-1)}{4(n-1)}
\ge
(3-o(1))\log n
=
\left(\frac{3}{\ln 2}-o(1)\right)\ln n.
\]
As
$3/\ln 2>4.3$,
we have
$%\[
\exp\left(-\frac{3N(N-1)}{4(n-1)}\right)
\le
n^{-4.1}
=
o(n^{-4}).
$\\ %\]
Also,
$e^{-\Omega(N)}=o(n^{-4})$.
Thus both terms above are
$o(n^{-4})$,
and in particular, for all sufficiently large
$n$,
their sum is at most
$n^{-4}$.
Therefore,
$\Pr[W_1\cap W_2=\emptyset]\le n^{-4}$.
\end{proof} 

%% file: Ido-1-alg-1-complexity.tex
\begin{proof}
\textbf{Time:} The algorithm runs for exactly
$2L$
rounds.
Round
$0$
and
$2L+1$
are not counted as they do not involve communication.
With
$L = 2\times \lceil \sqrt{n \log n} \rceil - 1$,
time is as claimed.\\
\\
\textbf{Total Messages:}
Let $C$ be the set of candidates.
Since each node becomes a candidate with probability
$p = \frac{\log n}{n}$,
the expected number of candidates is
$E[|C|] = n \cdot \frac{\log n}{n} = \log n$.
By Chernoff bounds,
$|C| \leq 4 \log n$ w.h.p.%
\footnote{
We use a common form of the Chernoff bound %for the "upper tail":
$P(X \ge (1+\delta)\mu) \le ( {e^\delta}/{(1+\delta)^{(1+\delta)}} )^\mu$,
where: $\mu = \log n$, and $\delta = 3$.
Thus,
$P(|C| \ge 4 \log n) \le ({e^3}/{4^4})^{\log n} \le (0.078)^{\log n} = n^{\log(0.078)} \approx n^{-3.68}$
}
Each candidate generates exactly one token that travels at most
$L$
steps forward and
$L$
steps back.
The total number of messages is at most
$|C| \cdot 2 L$.
Substituting
$L = 2\times \lceil \sqrt{n \log n} \rceil - 1$,
w.h.p.:
$M \leq (4 \log n) \cdot 4 \lceil \sqrt{n \log n} \rceil = O(\sqrt{n} \log^{1.5} n).$\\
\\
\textbf{Per-Node Complexity:}
Fix a node $v$.
We bound the number of forward messages received by $v$.
Let
$X_v^F$
denote this number.
Let
$\mathcal{E}$
be the event that
$|C|\le 4\log n$.
As shown above,
$\Pr[\neg \mathcal{E}]\le n^{-3.67}$.
Condition on an arbitrary candidate set
$C$
such that
$|C|\le 4\log n$.
Order all possible forward trials in some fixed order, for example first by the ID of the candidate that generated the token, and then by the step number of the token.
Let
$N = L + 1 = 2 \lceil \sqrt{n \log n} \rceil$.
There are at most
$4N\log n$
such trials.
For the $i$th trial, let
$I_i$
be the indicator for the event that the corresponding forward message is received by
$v$.
Then
$X_v^F=\sum_{i=1}^{4N\log n} I_i$.

The indicators
$I_i$
are not necessarily independent.
However, for every history of the process before the
$i$th
trial, the probability that the next forward message is received by
$v$
is at most
$1/(n-1)$.
Indeed, if the token is currently at $v$, then the probability that the next message is received by
$v$
is
$0$;
otherwise, the next node is chosen uniformly among the
$n-1$
neighbors.
Thus, conditioned on
$\mathcal{E}$,
random variable
$X_v^F$
is stochastically dominated by the random variable
$Y \sim \operatorname{Bin}(4N\log n,1/(n-1))$.%
\footnote{%
This is the standard coupling under the conditioning on
$\mathcal{E}$:
if
$q_i=\Pr[I_i=1\mid \mathcal{E},\mathcal{H}_{i-1}]$,
where
$\mathcal{H}_{i-1}$
is the full history before the
$i$th trial, then
$q_i\le 1/(n-1)$.
Using independent
$U_i\sim\operatorname{Unif}[0,1]$,
one may realize
$I_i=\mathbf{1}[U_i\le q_i]$,
while
$B_i=\mathbf{1}[U_i\le 1/(n-1)]$
are independent Bernoulli
$(1/(n-1))$
variables and satisfy
$I_i\le B_i$
for all
$i$.}
Therefore, for every fixed integer
$k$,
\[
\Pr[X_v^F\ge k\mid \mathcal{E}]
\le
\Pr[Y\ge k].
\]
If
$Y\ge k$,
then there exists a set
$S$
of
$k$
successful trials.
Therefore, by a union bound over all such sets
$S$,
\[
\Pr[Y\ge k]
\le
\sum_{\substack{S\subseteq [4N\log n]\\ |S|=k}}
\Pr[\text{all trials in } S \text{ are successful}]
=
\binom{4N\log n}{k}
\left(\frac{1}{n-1}\right)^k
\le
\frac{1}{k!}
\left(\frac{4N\log n}{n-1}\right)^k .
\]
Taking
$k=6$
and using
$N=2\lceil\sqrt{n\log n}\rceil$, we get
\[
\Pr[X_v^F\ge 6\mid \mathcal{E}]
=
O\left(\frac{\log^9 n}{n^3}\right).
\]
By a union bound over all nodes,
\[
\Pr[\exists v\in V: X_v^F\ge 6\mid \mathcal{E}]
\le
n\cdot O\left(\frac{\log^9 n}{n^3}\right)
=
O\left(\frac{\log^9 n}{n^2}\right).
\]
Thus,
\[
\Pr[\exists v\in V: X_v^F\ge 6]
\le
\Pr[\neg\mathcal{E}]
+
\Pr[\exists v\in V: X_v^F\ge 6\mid \mathcal{E}]
\le
n^{-3.67}
+
O\left(\frac{\log^9 n}{n^2}\right)
\le
\frac{1}{n},
\]
for all sufficiently large $n$.
Thus, with probability at least $1-1/n$, every node receives at most $5$ forward messages.
It remains to translate this into total processed messages, namely sent and received messages in both the forward and trace-back phases.
During the forward phase, every non-initial forward message sent by $v$ is caused by a previous forward message received by $v$.
In addition, $v$ may send one initial message if it is a candidate.
Hence, the number of forward messages sent by $v$ is at most
$X_v^F+1$.

During the trace-back phase, every trace-back message sent by $v$ corresponds to a forward message previously received by $v$, and every trace-back message received by $v$ corresponds to a forward message previously sent by $v$.
Therefore, the number of trace-back messages sent by $v$ is at most $X_v^F$, and the number of trace-back messages received by $v$ is at most $X_v^F+1$.
Consequently, the total number of messages processed by $v$ is at most
\[
X_v^F+(X_v^F+1)+X_v^F+(X_v^F+1)
=
4X_v^F+2.
\]
Therefore, with probability at least $1-1/n$, every node processes at most $22$ messages.
Thus, the per-node message complexity is $O(1)$.
\end{proof} 

%% file: Ido-4-alg-2-intersection.tex
\begin{proof}
By Lemma~\ref{lem:exact-volume-physical-budget},
both logical trees
$T_1'$
and
$T_2'$
contain exactly
$N$
logical nodes.
Let
$X'_1,\dots,X'_N$
and
$Y'_1,\dots,Y'_N$
be the logical nodes of
$T_1'$
and
$T_2'$,
respectively, ordered in BFS order, where
$X'_1$
and
$Y'_1$
are the roots.
For every
$i\in[N]$,
let
$X_i$
(resp.,
$Y_i$)
be the physical node reached by the token corresponding to
$X'_i$
(resp.,
$Y'_i$).
Then
$T_1$
and
$T_2$
are exactly the sets
$\{X_1,\dots,X_N\}$
and
$\{Y_1,\dots,Y_N\}$,
respectively.

The rest of the proof is the same as the proof of
Lemma~\ref{lem:RW-intersection},
with the following minor difference.
In Lemma~\ref{lem:RW-intersection},
for every
$j\ge 2$,
the node
$Y_j$
is sampled from
$Y_{j-1}$.
Here,
$Y_j$
is sampled from the physical node corresponding to the logical parent of
$Y'_j$,
which appears earlier in the BFS order, but is not necessarily
$Y_{j-1}$.
This makes no difference, since the proof conditions on all previously exposed physical nodes
$Y_1,\dots,Y_{j-1}$
being outside
$S_X=\{X_1,\dots,X_N\}$.
Thus, the physical node corresponding to the logical parent of
$Y'_j$
is outside
$S_X$,
and the same bound applies.

The same observation applies when lower-bounding the number of distinct nodes in
$S_X$.
When exposing
$X_i$
in BFS order, the node
$X_i$
is sampled from the physical node corresponding to the logical parent of
$X'_i$,
which appears earlier in the BFS order.
Thus, conditioning on the previously exposed physical nodes gives the same lower bound on the probability that
$X_i$
is new as in Lemma~\ref{lem:RW-intersection}.
Therefore, the calculation from Lemma~\ref{lem:RW-intersection} applies verbatim and gives
$%\[
\Pr[T_1\cap T_2=\emptyset]
\le
n^{-4},
$%\]
for all sufficiently large
$n$.
\end{proof} 

%% file: Ido-5-alg-2-complexity.tex
\begin{proof}$~$
\textbf{Time:}
Let
$N = 2\left\lceil \sqrt{n\log n}\right\rceil = O(\sqrt{n \log n})$.
The algorithm executes for
$2(H+1)$
rounds.
For
$\ell = 1$,
since
$H = N - 2$,
the time complexity is $O(\sqrt{n \log n})$.
For
$\ell\geq 2$,
since
\[
H
=
\left\lceil
\log_\ell((\ell-1)N+1)
\right\rceil
-2 \le O(\log_\ell (2 \ell N))
\le
O(\log_\ell (2\ell) + \log_\ell (N))
\le
O(\log_\ell (N)),
\]
the time complexity is
$O(\log_\ell \sqrt{n \log n})$.\\
\\
\textbf{Total Messages:}
As proven in Theorem~\ref{thm:RW:Complexity Bounds},
the number of candidates is bounded by
$|C|\le 4\log n$
w.h.p.
By Lemma~\ref{lem:exact-volume-physical-budget},
the logical tree generated by each candidate has volume
$N$.
Thus, each candidate generates at most
$N-1$
EXPLORE messages.
Therefore, the total number of EXPLORE messages over all candidates is at most
$|C|\cdot (N-1)\le |C|\cdot N$.

Each ACK message corresponds to at least one generated EXPLORE message.
Indeed, ACKs are only sent back along edges that were created by EXPLORE messages, and if several ACKs are bundled, this only decreases the number of physical ACK messages.
Hence, Phase 3 generates at most the same number of messages as Phase 2.
Therefore, the total number of messages satisfies
$
M
\le
2|C|N
\le
2\cdot (4\log n)\cdot 2\left\lceil \sqrt{n\log n}\right\rceil
=
O(\sqrt{n}\log^{1.5}n)
$
w.h.p.\\
\\
\textbf{Per-Node Complexity:}
Fix a physical node
$v$.
We first bound the number of logical EXPLORE units received by
$v$.
Let
$X_v^E$
denote this number.
Let
$\mathcal{E}$
be the event that
$|C|\le 4\log n$.
As in Theorem~\ref{thm:RW:Complexity Bounds},
$\Pr[\neg \mathcal{E}]\le n^{-3.67}$.
Condition on an arbitrary candidate set
$C$
such that
$|C|\le 4\log n$.
Order all logical EXPLORE trials in some fixed order, first by the ID of the candidate that generated the logical tree, and then by the BFS order of the logical tree.
By Lemma~\ref{lem:exact-volume-physical-budget},
each candidate generates a logical tree of exact volume
$N$,
and therefore there are at most
$4N\log n$
logical EXPLORE trials.

For the
$i$th trial, let
$I_i$
be the indicator for the event that the corresponding logical EXPLORE unit is received by
$v$.
Then
$X_v^E=\sum_{i=1}^{4N\log n} I_i$.
As in the per-node proof of Theorem~\ref{thm:RW:Complexity Bounds},
for every history before the
$i$th trial, the probability that the next logical EXPLORE unit is received by
$v$
is at most
$1/(n-1)$.
Indeed, if the physical node corresponding to the logical parent is
$v$,
then the probability that the next logical child is mapped to
$v$
is
$0$;
otherwise, the next physical node is chosen uniformly among the
$n-1$
neighbors.

Thus, conditioned on
$\mathcal{E}$,
the random variable
$X_v^E$
is stochastically dominated by the random variable
$\operatorname{Bin}(4N\log n,1/(n-1))$.
The same calculation as in Theorem~\ref{thm:RW:Complexity Bounds}, with
$X_v^E$
in place of
$X_v^F$,
gives
\[
\Pr[\exists v\in V:X_v^E\ge 6]
\le
\frac{1}{n},
\]
for all sufficiently large
$n$.
Thus, with probability at least
$1-1/n$,
every physical node receives at most
$5$
logical EXPLORE units.

It remains to translate this into total processed physical messages.
Physical bundling can only decrease the number of messages, so it is enough to upper-bound the number of logical units processed by
$v$.
During Phase 2, every logical EXPLORE unit received by
$v$
can cause at most
$\ell$
logical EXPLORE units to be sent by
$v$,
because every logical node has at most
$\ell$
children by Lemma~\ref{lem:exact-volume-physical-budget}.
In addition,
$v$
may send at most
$\ell$
initial logical EXPLORE units if it is a candidate.
Hence, the number of EXPLORE units sent by
$v$
is at most
$\ell X_v^E+\ell$.

During Phase 3, every ACK unit sent by
$v$
corresponds to a logical EXPLORE unit previously received by
$v$.
Thus, the number of ACK units sent by
$v$
is at most
$X_v^E$.
Similarly, every ACK unit received by
$v$
corresponds to a logical EXPLORE unit previously sent by
$v$.
Therefore, the number of ACK units received by
$v$
is at most
$\ell X_v^E+\ell$.

Consequently, the total number of logical units processed by
$v$
is at most
\[
X_v^E+(\ell X_v^E+\ell)+X_v^E+(\ell X_v^E+\ell)
=
(2\ell+2)X_v^E+2\ell.
\]
Since, with probability at least
$1-1/n$,
we have
$X_v^E\le 5$
for every node
$v$,
every node processes at most
$
(2\ell+2)\cdot 5+2\ell
=
12\ell+10
=
O(\ell)
$
messages.
Thus, the per-node message complexity is
$O(\ell)$.
\end{proof} 

%% file: Ido-3-alg-2-logical-tree-analysis.tex
Section~\ref{sec:logical_tree_construction}
presents two definitions of
$H$:
an explicit formula and a maximum feasible height.
In Lemma~\ref{lem:exact-core-height},
we use the explicit formula definition and show that it is well defined and equivalent to the maximum definition.

\begin{lemma}\label{lem:exact-core-height}
The value
$H$,
namely
$H=N-2$
if
$\ell=1$,
and
\[
H
=
\left\lceil
\log_\ell((\ell-1)N+1)
\right\rceil
-2
\]
if
$2 \le \ell \le N$,
satisfies the following properties for every
$n\ge 2$:
\begin{enumerate}
    \item
    $H$
    is well defined.
    That is, when
    $2\le \ell\le N$,
    the logarithm is taken with a valid base and over a strictly positive input.

    \item
    $H\ge 0$.

    \item
    The set
    $\{h \in \mathbb{Z}_{\ge 0} \mid T_h<N\}$
    is a nonempty bounded integer set.
    That is, the maximum integer
    $h\ge 0$
    such that
    $T_h<N$
    is well defined.
    Moreover,
    $H$
    is exactly that maximum,
    and
    $T_H < N \le T_{H+1}$.
\end{enumerate}
\end{lemma}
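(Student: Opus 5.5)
The plan is to treat the cases $\ell=1$ and $2\le\ell\le N$ separately, and in each case to work directly from the closed form of $T_h$.

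\textbf{Case $\ell=1$.} Here $T_h=h+1$, so $T_h<N$ if and only if $h\le N-2$. Since $n\ge 2$ gives $N=2\lceil\sqrt{n\log n}\rceil\ge 2$, we have $H=N-2\ge 0$. The set $\{h\ge 0: T_h<N\}=\{0,\dots,N-2\}$ is nonempty and bounded, and its maximum is $N-2=H$. Finally $T_H=N-1<N=T_{H+1}$. Items 1 to 3 all follow at once.

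\textbf{Case $\ell\ge 2$.} Well-definedness is immediate: the base $\ell\ge 2$ is valid, and $(\ell-1)N+1\ge N+1>0$.

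The main step is the equivalence
\[
T_h<N \iff \ell^{h+1}<(\ell-1)N+1 ,
\]
which comes from multiplying $T_h=\frac{\ell^{h+1}-1}{\ell-1}$ by $\ell-1>0$. Write $x:=(\ell-1)N+1$, so that $H=\lceil\log_\ell x\rceil-2$. The condition $T_h<N$ becomes $h+1<\log_\ell x$. For an integer $h+1$, this is equivalent to $h+1\le \lceil\log_\ell x\rceil-1$, i.e., $h\le H$. This step uses the elementary fact that an integer $m$ satisfies $m<y$ if and only if $m\le\lceil y\rceil-1$. The fact holds for every real $y$, including the case where $y$ is an integer, which is exactly the case the ceiling is designed to handle. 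Hence $\{h\ge 0: T_h<N\}=\{0,\dots,H\}\cap\mathbb{Z}_{\ge 0}$, and $H+1$ fails the condition, so $T_{H+1}\ge N$.

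It remains to check $H\ge 0$, which also gives nonemptiness. This amounts to $T_0=1<N$, true since $N\ge 2$. Equivalently, $\ell<(\ell-1)N+1$, which holds because $(\ell-1)(N-1)\ge 1$. Boundedness follows because $T_h\ge h+1$, so any $h$ with $T_h<N$ satisfies $h<N$. The maximum therefore exists and equals $H$, and $T_H<N\le T_{H+1}$.

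\textbf{Expected obstacle.} The only delicate point is the ceiling manipulation when $\log_\ell x$ is an exact integer. I will handle it with the integer/ceiling fact above rather than splitting into cases. The hypothesis $\ell\le N$ is not actually needed beyond positivity; I will simply note that.
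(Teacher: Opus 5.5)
Your proof is correct and follows essentially the same route as the paper: the same case split, the same reduction $T_h<N \iff \ell^{h+1}<(\ell-1)N+1$, and the same ceiling manipulation. The differences are minor. You derive the exact characterization $T_h<N\iff h\le H$ in one step, whereas the paper proves $T_H<N\le T_{H+1}$ from $\lceil A\rceil-1<A\le\lceil A\rceil$ and then invokes monotonicity of $T_h$. The paper also uses $\ell\le N$ to get $(\ell-1)N+1>\ell$, which, as you correctly observe, already follows from $(\ell-1)(N-1)\ge 1$.
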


\begin{proof}
Since
$N=2\left\lceil\sqrt{n\log n}\right\rceil$,
for
$n \ge 2$,
we have
$N \ge 2$.
Let
$\mathcal{H} := \{h \in \mathbb{Z}_{\ge 0} \mid T_h<N\}$.
For both
$\ell=1$
and
$\ell\ge 2$,
we have
$T_0=1 < 2 \le N$.
Thus,
$\mathcal{H} \neq \emptyset$.
Since
$T_h \overset{h \to \infty}{\longrightarrow}\infty$,
there exists an integer
$M$
such that for every
$h\ge M$,
we have
$T_h\ge N$.
Therefore, every
$h$
satisfying
$T_h<N$
must satisfy
$h<M$,
so the set
$\mathcal{H}$
is bounded.
Since it is also nonempty, it has a maximum.
Therefore,
$\max \mathcal{H}$,
(i.e.,
the maximum integer
$h\ge 0$
such that
$T_h<N$)
is well defined.

First, suppose that
$\ell=1$.
Then
$H=N-2$
is well defined.
Also, since
$N\ge 2$,
we have
$H\ge 0$.
Moreover,
$T_h=h+1$.
Thus,
$T_h<N$
is equivalent to
$h+1<N$.
Since
$N$
is an integer, this is equivalent to
$h\le N-2$.
Therefore, the maximum such integer is indeed
$H=N-2$.
Moreover,
\[
T_H = H + 1 = N - 1 < N = H + 2 = T_{H+1},
\]
so
$T_H < N \le T_{H+1}$.

Now suppose that
$2\le \ell\le N$.
Let
\[
A := \log_\ell((\ell-1)N+1).
\]
First,
$A$
is well defined.
Indeed, since
$\ell\ge 2$,
the base of the logarithm is valid.
Moreover, since
$2\le \ell\le N$,
we have
\begin{equation}\label{eq:log-input-positive}
(\ell-1)N+1
=
\ell N-N+1
\ge
2N-N+1
=
N+1
\ge
\ell+1
>
\ell
>
0.
\end{equation}
Thus,
$A=\log_\ell((\ell-1)N+1)$
is well defined.

The proposed value is
$H=\lceil A\rceil-2$.
We show that
$H\ge 0$.
By Inequality~\eqref{eq:log-input-positive},
we have
$(\ell-1)N+1>\ell$.
Therefore, since
$\ell\ge 2$
and
$\log_\ell(x)$
is strictly monotonically increasing in
$x>0$,
we have
$A>\log_\ell(\ell)=1$.
Thus,
$\lceil A\rceil\ge A>1$.
Since
$\lceil A\rceil$
is an integer,
$\lceil A\rceil\ge 2$.
Hence,
$H=\lceil A\rceil-2\ge 0$.

We now show that
$T_H<N$.
Since
$\lceil A\rceil-1<A$,
we have
$H+1=\lceil A\rceil-1<A$.
Therefore,
\[
\ell^{H+1}<\ell^A=(\ell-1)N+1.
\]
It follows that
\[
T_H
=
\frac{\ell^{H+1}-1}{\ell-1}
<
N.
\]

We next show that
$T_{H+1}\ge N$.
Since
$\lceil A\rceil\ge A$,
we have
$H+2=\lceil A\rceil\ge A$.
Therefore,
\[
\ell^{H+2}\ge \ell^A=(\ell-1)N+1.
\]
It follows that
\[
T_{H+1}
=
\frac{\ell^{H+2}-1}{\ell-1}
\ge
N.
\]

Thus,
\[
T_H<N\le T_{H+1}.
\]
Moreover,
$T_h$
is increasing in
$h$:
for
$\ell=1$,
we have
$T_{h+1}-T_h=1>0$,
and for
$\ell\ge 2$,
we have
$T_{h+1}-T_h=\ell^{h+1}>0$.
This proves that
$H$
is exactly the maximum integer
$h \ge 0$
such that
$T_h<N$.
\end{proof}

Section~\ref{sec:logical_tree_construction}
describes the tree expansion, including how budgets are partitioned during the core-tree expansion and how the supplementary leaves are assigned at the boundary of the core tree.
We now make these two partition rules explicit, as they will be used in the proofs below.

First, consider a physical node
$v$
that hosts
$t$
logical nodes belonging to the same candidate at core depth
$d$,
where
$0\le d<H$,
and has total budget
$x$
for these
depth-$d$
logical nodes.
As described in Section~\ref{sec:logical_tree_construction},
the node creates
$\ell\cdot t$
logical children of depth
$d+1$.
Each hosted logical node is the logical parent of exactly
$\ell$
of these children.
The budget
$x$
is split among these
$\ell\cdot t$
logical children as follows.
Order the
$\ell\cdot t$
children arbitrarily as children
$1,\ldots,\ell\cdot t$.
For the
$j$th child, where
$1\le j\le \ell\cdot t$,
set
$x_j$,
the budget assigned to the
$j$th child, to be
\[
x_j
=
\min\{\ell^{H-d},\max\{0,x-(j-1)\ell^{H-d}\}\}.
\]

Second, consider a physical node
$v$
that hosts
$t$
logical nodes belonging to the same candidate at core depth
$H$,
and has total budget
$x$
for these
depth-$H$
logical nodes.
As described in Section~\ref{sec:logical_tree_construction},
the budget
$x$
is partitioned among these
$t$
hosted logical nodes, and the resulting values determine how many supplementary leaves each hosted logical node creates.
Order the
$t$
hosted
depth-$H$
logical nodes arbitrarily as nodes
$1,\ldots,t$.
For the
$j$th hosted logical node, where
$1\le j\le t$,
set
$y_j$,
the number of supplementary leaves assigned to the
$j$th hosted logical node, to be
\[
y_j
=
\min\{\ell,\max\{0,x-(j-1)\ell\}\}.
\]
The
$j$th hosted logical node is the logical parent of exactly
$y_j$
supplementary leaves.

Next, in
Lemma~\ref{lem:exact-volume-physical-budget-flat},
we verify that the construction has the claimed properties in the degenerate branching case
$\ell=1$.
Intuitively, in this case, the core tree is simply a path, and the construction adds one final leaf to obtain exactly
$N$
logical nodes.

\begin{lemma}\label{lem:exact-volume-physical-budget-flat}
For every
$n \ge 2$
and
$\ell = 1$,
the construction as described in Section~\ref{sec:logical_tree_construction}
(as implemented in Algorithm~\ref{alg:tree_election}) produces an almost complete
$\ell$-ary logical tree with exactly
$N$
logical nodes.
That is,
for every depth
$0\le d<H$,
each logical node of depth
$d$
has exactly
$\ell$
logical children, and every logical node of depth
$H$
has at most
$\ell$
logical children.
The number of supplementary leaves added to the core tree, namely at depth
$H+1$,
is exactly the initial budget
$\LeafNumber(\ell,N)$.
\end{lemma}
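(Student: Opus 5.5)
The plan is to specialize the construction to $\ell=1$ and track the logical tree round by round. By Lemma~\ref{lem:exact-core-height}, for $\ell=1$ we have $T_h=h+1$, $H=N-2\ge 0$, $T_H=N-1$, and therefore $\LeafNumber(1,N)=N-T_H=1$. The goal is to show that the logical tree is a path $u_0,u_1,\dots,u_H$ (the core tree, $H+1=N-1$ nodes) with exactly one extra leaf $u_{H+1}$ at depth $H+1$. This gives exactly $N$ logical nodes, every node of depth $d<H$ has exactly one child, and the depth-$H$ node has at most one child.

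\textbf{Core levels.} I would first prove by induction on $d=0,\dots,H$ that exactly one logical node exists at depth $d$, hosted by a single physical node with $t=1$, carrying budget $x=1$. The base case is the root: $t=1$, and Phase~1 sets $budget\_sum=N-T(1,N)=1$. For the inductive step with $d<H$, the host draws $\ell\cdot t=1$ branch, so it creates exactly one child of depth $d+1$ and sends one bundle with $bundle\_cnt=1$. By the partition rule $x_1=\min\{\ell^{H-d},\max\{0,x\}\}=\min\{1,1\}=1$, the whole budget $1$ travels with that child. The receiver records $explore\_cnt[d+1]=1$ and $budget\_sum=1$, so the invariant holds at depth $d+1$.

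\textbf{Last level.} At depth $H$, the single host with $x=1$ applies the second rule, $y_1=\min\{\ell,\max\{0,x\}\}=1$. In round $H+1$ of Algorithm~\ref{alg:tree_election} it therefore sends exactly one EXPLORE with budget $0$. This produces exactly one supplementary leaf at depth $H+1$, matching $\LeafNumber(1,N)$. The node count is then $T_H+1=(N-1)+1=N$.

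\textbf{Main obstacle.} The only subtle point is reconciling the logical-tree statement with the physical simulation. A single path can revisit a physical node, and for the maximal candidate, distinct candidates' messages may interact. I would handle this by arguing about the logical tree as defined in Section~\ref{sec:logical_tree_construction}, which is indexed by depth (round). Each depth has exactly one logical token, and the algorithm stores per-round state $explore\_cnt[i]$ and $parent[i]$, so revisits at different rounds never merge logical nodes. The budget variable is also consumed fully in each round before the next receive step; the comment in the algorithm notes $budget\_sum=0$ at that point. I would verify this reset explicitly, so that a revisited node does not accumulate a stale budget.
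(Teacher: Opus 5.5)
Your proposal is correct and follows essentially the same route as the paper's proof. Both use Lemma~\ref{lem:exact-core-height} to get $H=N-2$, $T_H=N-1$ and initial budget $N-T_H=1$, induct along the path to show that the unique depth-$d$ token carries budget $x_1=\min\{1,\max\{0,1\}\}=1$, and finish with $y_1=1$ at depth $H$, giving one supplementary leaf and $(N-1)+1=N$ logical nodes; your extra paragraph on physical revisits and the reset of $budget\_sum$ is harmless but not needed, since the paper argues purely about the logical tree.
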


\begin{proof}
By Lemma~\ref{lem:exact-core-height},
$H \ge 0$.
So our core tree height is well defined (it may include only the root, when
$H = 0$).
By the definition of
$H$
for
$\ell=1$,
we have
$H=N-2$.
By the definition of
$T_H$
for
$\ell=1$,
we have
$T_H=H+1=N-1$.
By the definition of
$\LeafNumber(1,N)$,
we get
$\LeafNumber(1,N) = N - T_H = 1$.

Due to the branching parameter
$\ell = 1$,
the core tree is a path of height
$H$,
and therefore contains exactly
$T_H = H + 1 = N-1$
logical nodes.
By induction, the budget is exactly
$1$.
Indeed, the root starts with budget
$\LeafNumber(1,N) = 1$.
At every core depth
$0 \le d < H$,
there is one hosted logical token and one logical child.
The rule defining
$x_j$
has only
$j=1$
and gives
\[
x_1
=
\min\left\{
\ell^{H-d},
\max\left\{
0,
x-(j-1)\ell^{H-d}
\right\}
\right\}
=
\min\left\{
1,
\max\left\{
0,
1
\right\}
\right\}
=
1,
\]
since
$x = 1$
by induction.

Thus, the budget
$1$
is passed along the unique path until the unique core leaf at depth
$H$.
At depth
$H$,
there is one hosted logical token and budget
$1$.
The rule defining
$y_j$
has only
$j=1$
and gives
\[
y_1
=
\min\left\{
\ell,
\max\left\{
0,
x-(j-1)\ell
\right\}
\right\}
=
\min\left\{
1,
\max\left\{
0,
1
\right\}
\right\}
=
1.
\]
Hence, the unique core leaf creates exactly one additional leaf.
Therefore, the total number of logical nodes is
\[
T_H+\LeafNumber(1,N)
=
(N-1)+1
=
N.
\]

Moreover, for every depth
$0\le d<H$,
the unique logical node of depth
$d$
creates exactly one logical child.
The unique logical node of depth
$H$
creates exactly one supplementary leaf.
Thus, the logical tree is an almost complete
$1$-ary tree.
\end{proof}

We now verify the claimed properties of the construction in the genuine branching case
$2\le \ell\le N$.
Intuitively, here, the core tree is a perfect
$\ell$-ary tree, and the remaining
$\LeafNumber(\ell,N)$
leaves are distributed using the bundled physical budget.

\begin{lemma}\label{lem:exact-volume-physical-budget-tree}
For every
$n \ge 2$
and every integer
$2 \le \ell \le N$,
the construction as described in Section~\ref{sec:logical_tree_construction}
(as implemented in Algorithm~\ref{alg:tree_election}) produces an almost complete
$\ell$-ary logical tree with exactly
$N$
logical nodes.
That is,
for every depth
$0\le d<H$,
each logical node of depth
$d$
has exactly
$\ell$
logical children, and every logical node of depth
$H$
has at most
$\ell$
logical children.
The number of supplementary leaves added to the core tree, namely at depth
$H+1$,
is exactly the initial budget
$\LeafNumber(\ell,N)$.
\end{lemma}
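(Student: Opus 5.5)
The key quantity is the budget attached to each logical node. I will prove a depth-indexed invariant on how this budget is distributed across logical nodes of the same depth, and then read off all claims.

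Since bundling only groups logical children into physical messages without altering their number or budgets, I will argue entirely on the logical tree. A bundle of $k$ tokens carries the sum of the $k$ individual budgets $x_j$, so the per-child rule for $x_j$ is what governs the logical tree. Concretely, the sequential loop in Algorithm~\ref{alg:tree_election}, which assigns $\min\{bundle\_cnt\cdot\ell^{H-i+1},budget\_sum\}$, equals the sum of the consecutive $x_j$'s of that bundle. By Lemma~\ref{lem:exact-core-height}, $H\ge 0$ is well defined and $T_H<N\le T_{H+1}$. Hence $0<\LeafNumber(\ell,N)=N-T_H\le T_{H+1}-T_H=\ell^{H+1}$.

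\textbf{Invariant.} For each depth $0\le d\le H$, consider the depth-$d$ logical nodes in the induced order, children of earlier parents first and each host's children in draw order. The budgets at depth $d$ then form a ``greedy'' vector: a prefix of entries equal to $\ell^{H-d+1}$, then at most one entry strictly between $0$ and $\ell^{H-d+1}$, then zeros. Moreover, the budgets sum to exactly $\LeafNumber(\ell,N)$. Every single entry is at most $\ell^{H-d+1}$.

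The proof is by induction on $d$. For $d=0$ the claim holds because the single root has budget $\LeafNumber(\ell,N)\le\ell^{H+1}$.

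For the step $d\to d+1$, fix a physical host with $t$ depth-$d$ logical nodes and aggregated budget $x$. Since each hosted node carries at most $\ell^{H-d+1}$, we have $x\le t\,\ell^{H-d+1}=(\ell t)\,\ell^{H-d}$. The rule
\[
x_j=\min\{\ell^{H-d},\max\{0,x-(j-1)\ell^{H-d}\}\}
\]
therefore distributes all of $x$ over the $\ell t$ children. Each child gets at most $\ell^{H-d}$, so total budget is conserved and the per-entry bound carries over to depth $d+1$.

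\textbf{Main obstacle.} The delicate point is that several physical hosts each perform their own greedy split. The aggregate at depth $d+1$ need not be one global greedy vector; there may be several partial entries. I therefore expect to drop the ``prefix'' part of the invariant and keep only the two facts actually needed: conservation of total budget, and the per-node cap $\ell^{H-d+1}$. Internal collisions merge budgets and token counts additively, and Lemma~\ref{lem:conservation} already uses this, so conservation survives merging. The cap survives as well, because the host bound $x\le t\,\ell^{H-d+1}$ only depends on the per-node caps of the merged nodes.

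At depth $H$ each node has budget at most $\ell$. The rule
\[
y_j=\min\{\ell,\max\{0,x-(j-1)\ell\}\}
\]
then gives every depth-$H$ node at most $\ell$ leaves, with $\sum_j y_j=x$ because $x\le t\ell$. Summing over hosts, the number of depth-$(H+1)$ leaves is exactly $\LeafNumber(\ell,N)$.

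For $d<H$ every logical node spawns exactly $\ell$ children regardless of budget, so the core tree is perfect with $T_H$ nodes. The total volume is therefore $T_H+\LeafNumber(\ell,N)=N$, and the logical tree is almost complete, which proves the lemma.
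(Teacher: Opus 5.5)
Your argument is correct and is essentially the paper's proof: the paper proves by induction the host-level bound $0\le x\le t\ell^{H-d+1}$, which is what your per-node cap gives after summing over merged tokens, uses it to show that the greedy splits $x_j$ and $y_j$ exhaust each host's budget within the per-child caps, and concludes by budget conservation that the tree has $T_H+(N-T_H)=N$ logical nodes. In a write-up, state the weakened invariant (conservation plus the per-node cap) from the start, since the ``prefix'' form you first announce is never used.
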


\begin{proof}
By Lemma~\ref{lem:exact-core-height},
$H$
is well defined,
$H \ge 0$.
So our core tree height is well defined (it may include only the root, when
$H = 0$).
Additionally, by the same Lemma, we have
\begin{equation}\label{eq:exact-core-maximality}
T_H<N\le T_{H+1}.
\end{equation}

Therefore,
$1\le \LeafNumber(\ell,N)=N-T_H$.
Also, since
$T_{H+1}=T_H+\ell^{H+1}$,
Equation~\eqref{eq:exact-core-maximality} implies
\[
\LeafNumber(\ell,N)
=
N-T_H
\le
T_{H+1}-T_H
=
\ell^{H+1}.
\]
Thus,
\begin{equation}\label{eq:exact-leaf-capacity}
1\le \LeafNumber(\ell,N)\le \ell^{H+1}.
\end{equation}

We prove the following invariant.
For every core depth
$0 \le d \le H$,
for every physical node
$v$
that hosts
$t$
logical tokens that correspond to the same candidate and to depth
$d$,
and has total budget
$x$
for those
depth-$d$
tokens, we have
\begin{equation}\label{eq:physical-budget-invariant}
0\le x\le t\ell^{H-d+1}.
\end{equation}

We prove the invariant by induction on the core depth
$d$.
At the root, we have
$t=1$,
$d=0$,
and
$x=\LeafNumber(\ell,N)$.
By Equation~\eqref{eq:exact-leaf-capacity},
\[
x\le \ell^{H+1}=t\ell^{H-d+1}.
\]
Thus, the invariant holds at core depth
$0$.

Now fix a depth
$d$,
such that
$0 \le d <H$,
and assume that the invariant holds at this depth.
Consider a physical node
$v$
that hosts
$t$
logical tokens of depth
$d$
and has total budget
$x$
for these tokens.
By the induction hypothesis,
$0\le x\le t\ell^{H-d+1}$.
The construction creates
$\ell t$
logical children and assigns budgets
$x_1,\dots,x_{\ell t}$
according to
\[
x_j
=
\min\left\{
\ell^{H-d},
\max\left\{
0,
x-(j-1)\ell^{H-d}
\right\}
\right\}.
\]
For every
$j$,
this definition gives
$0\le x_j\le \ell^{H-d}$.
Moreover, since
$x\le t\ell^{H-d+1}
=
\ell t\cdot \ell^{H-d}$,
the greedy assignment exhausts the entire budget.
Therefore,
$\sum_{j=1}^{\ell t}x_j=x$.
Each child is at depth
$d+1$.
The budget bound required for one logical token at depth
$d+1$
is
\[
\ell^{H-(d+1)+1}
=
\ell^{H-d}.
\]
Thus, every child produced by the construction satisfies the required bound before bundling.

If several children arrive at the same physical node at depth
$d+1$,
the construction bundles together the children that correspond to the same candidate and to the same depth by summing their token counts and their budgets.
Suppose the incoming bundles that correspond to this candidate and to this depth are indexed by
$r$,
where bundle
$r$
contains
$t_r$
logical tokens and budget
$x_r$.

By the previous paragraph, each incoming bundle satisfies
$x_r\le t_r\ell^{H-d}$.
After bundling, the new token count is
$t'=\sum_r t_r$,
and the new budget is
$x'=\sum_r x_r$.
Hence,
\[
x'
=
\sum_r x_r
\le
\sum_r t_r\ell^{H-d}
=
t'\ell^{H-d}.
\]
Since
\[
H-(d+1)+1=H-d,
\]
this is exactly the invariant at depth
$d+1$.
Therefore, the invariant is preserved from depth
$d$
to depth
$d+1$.
By induction, the invariant holds at every core depth
$d$,
for
$0 \le d \le H$.

Let us now consider the supplementary leaves added to the core tree.
Suppose a physical node
$v$
hosts
$t$
logical tokens of depth
$H$
and has total budget
$x$
for these tokens.
By the invariant,
i.e.,
Equation~\ref{eq:physical-budget-invariant}, we have
\[
0\le x\le t\ell.
\]
The construction orders the
$t$
hosted logical tokens arbitrarily and assigns values
$y_1,\dots,y_t$
according to
\[
y_j
=
\min\left\{
\ell,
\max\left\{
0,
x-(j-1)\ell
\right\}
\right\}.
\]
For every
$j$,
this definition gives
$0\le y_j\le \ell$.
Since
$x\le t\ell$,
the greedy assignment exhausts the entire budget.
Therefore,
$\sum_{j=1}^t y_j=x$.
The
$j$th
hosted logical token creates exactly
$y_j$
additional leaf tokens.
Thus, every logical token at depth
$H$
creates at most
$\ell$
logical children.
For depths
$d<H$,
the construction makes every logical token create exactly
$\ell$
logical children.
Therefore, the logical tree is an almost complete
$\ell$-ary
tree.

It remains to count the number of logical nodes.
The core tree has exactly
$T_H$
logical nodes.
The budget-splitting rule preserves the sum of all budgets from one core depth to the next.
The bundling step also preserves this sum, because budgets are only summed over tokens of the same depth.
At depth
$H$,
the construction creates exactly
$x$
supplementary leaves from every physical bundle with budget
$x$.
Therefore, the total number of supplementary leaves created at depth
$H+1$
is exactly the initial root budget,
namely
$\LeafNumber(\ell,N)$.
Hence, the total number of logical nodes is
\[
T_H+\LeafNumber(\ell,N)
=
T_H+(N-T_H)
=
N.
\]
\end{proof}

Combining the two cases
$\ell = 1$
of Lemma~\ref{lem:exact-volume-physical-budget-flat},
and
$2 \le \ell \le N$
of Lemma~\ref{lem:exact-volume-physical-budget-tree}, 
we obtain the proof of Lemma~\ref{lem:exact-volume-physical-budget}.
\qed